\theoremstyle{thmstyleone}%
\newtheorem{theorem}{Theorem}
\newtheorem{proposition}[theorem]{Proposition}%
\newtheorem{Proposition}[theorem]{Proposition}%
\newtheorem{corollary}[theorem]{Corollary}%
\newtheorem{remark}[theorem]{Remark}%
\newtheorem{definition}[theorem]{Definition}%
\newcommand{\bb}[1]{\mathbb{#1}}
\newcommand{\s}{\lambda}
\newcommand{\ELIMINE}[1]{}
\newcommand{\gradient}{\overrightarrow{\mathrm{grad}}\xspace}
\newcommand{\dif}{\overrightarrow{\mathrm{diff}}\xspace}
\begin{document}
\title[Discrete Morse Functions and Watersheds]{Discrete Morse Functions and Watersheds}

\author[1]{\fnm{Gilles} \sur{Bertrand}}
\author[2]{Nicolas Boutry} 
\author[1]{Laurent Najman} 

\email{gilles.bertrand@esiee.fr}

\email{nicolas.boutry@lrde.epita.fr}

\email{laurent.najman@esiee.fr}

\affil[1]{\orgname{Univ Gustave Eiffel, CNRS}, \orgdiv{LIGM}, \orgaddress{\postcode{F-77454} \city{Marne-la-Vall\'ee},  \country{France}}}

\affil[2]{\orgname{EPITA}, \orgdiv{Research and Development Laboratory (LRDE)}, \orgaddress{\country{France}}}

\abstract{Any watershed, when defined on a stack on a normal pseudomanifold of dimension $d$, is a pure $(d-1)$-subcomplex that satisfies a drop-of-water principle. In this paper, we introduce Morse stacks, a class of functions that are equivalent to discrete Morse functions. We show that the watershed of a Morse stack on a normal pseudomanifold is uniquely defined, and can be obtained with a linear-time algorithm relying on a sequence of collapses. Last, we prove that such a watershed is the cut of the unique minimum spanning forest, rooted in the minima of the Morse stack, of the facet graph of the pseudomanifold.}

\keywords{Topological Data Analysis, Mathematical Morphology, Discrete Morse Theory, Simplicial Stacks, Minimum Spanning Forest.}

\maketitle              

\section{Introduction}

Watershed is a fundamental tool in computer vision, since its inception as an algorithm by the school of mathematical morphology \cite{digabel1978iterative,vincent1991watersheds}. It is still true in this era of deep learning, where it is used as a post-processing tool \cite{arbelaez2010contour}. From a discrete, theoretical point of view, the first topologically-sound approach was proposed in \cite{couprie1997topological,bertrand2005topological}. Building on those results, in~\cite{cousty2009watershed,cousty2009collapses,cousty2014collapses}, it is demonstrated that watersheds are included in skeletons on pseudomanifolds of arbitrary dimension. 

In this paper, we continue exploring the links between watershed and topology, in the framework of discrete Morse theory~\cite{forman1995discrete,scoville2019discrete}. Indeed, mathematical morphology~\cite{najman2013mathematical} and discrete Morse theory, although they pursue different objectives, share many similar ideas. In particular, as demonstrated in \cite{boutry2019equivalence,boutry2021equivalence,boutry:hal-03676854},  filtering minima using \emph{morphological dynamics}~\cite{grimaud1992new} in watershed-based image-segmentation, is equivalent to filtering the minima by \emph{persistence}, a fundamental tool from Persistent Homology~\cite{edelsbrunner2008persistent} used for topological data analysis~\cite{tierny2017introduction,munch2017user}.

Although the main ideas of the present paper originate in \cite{10.1007/978-3-031-19897-7_4}, we have worked towards a simpler, unifying framework for exposing these ideas. This leads us to introduce {\em Morse stacks}: these functions  correspond to the inverse of flat Witten-Morse functions that, according to R. Forman \cite{forman1998witten}, {\em seem to have shown themselves to be the appropriate combinatorial analogue of smooth non-degenerate Morse functions.} We also propose a new definition for normal pseudomanifolds, a class of manifolds on which the different notions of path-connectedness are equivalent. Relying on these notions, we prove that a watershed, a pure $(d-1)$-subcomplex of a normal pseudomanifold, has several interesting properties when defined on a Morse stack $F$. In particular, in this setting, a watershed is uniquely defined, and can be obtained thanks to a linear-time algorithm, relying on a sequence of collapses. Furthermore, a watershed is the cut of the unique minimum spanning forest of the facet graph of the normal pseudomanifold weighted by $F$, rooted in the minima of $F$. Relations between watersheds and Morse theory have long been informally known (see for example \cite{de2013discrete,delgado2014skeletonization,de2015morse} in the discrete setting or \cite{najman1994watershed} in the continuous setting), but this is the first time that a link, relying on a precise definition of the watershed, is presented in the discrete setting. Furthermore, as far as we know, this is the first time that a concept from Discrete Morse Theory is linked to a classical combinatorial optimization problem.

The plan of this paper is the following. Section \ref{sec.maths} provides some basic definitions of simplicial complexes. We introduce here the notion of a covering pair, that is fundamental for the definition of Morse stacks. Section \ref{sec:stacks} recalls some definitions of simplicial stacks, which are a class of weighted simplicial complexes whose upper threshold sets are also complexes. 
Section \ref{sec:normalpseudomanifolds} proposes a new definition for normal pseudomanifolds. Section~\ref{sec:watersheds} provides the necessary definitions for  watersheds on stacks defined on normal pseudomanifolds. We propose here an algorithm for computing watershed relying on the collapse operation. 
In section~\ref{sec:morse}, we introduce Morse stacks. Section~\ref{sec:mosewatersheds} studies the properties of watersheds on Morse stacks. Section~\ref{sec:mst} links  watersheds and minimum spanning forests. 
We conclude the paper with a discussion in section~\ref{sec:discussion}, in which we highlight the importance of our results, from both theoretical and practical points of view, and we propose some perspective for future work.

Finally, appendix \ref{app:pseudo} shows that our definition of normal pseudomanifold is equivalent to the classical one, and appendix \ref{app:DMF} demonstrates that Morse stacks are equivalent to classical discrete Morse functions.

The results of this paper are presented in the setting of simplicial complexes. It should be noted that they can be directly extended to other complexes. In particular, Morse stacks and Morse watersheds, presented hereafter, have a direct equivalent in the class of cubical complexes, 
 which holds significance in the field of image analysis.

\section{Simplicial complexes}
\label{sec.maths}

A \emph{simplex $x$} is a non-empty finite set; the dimension of $x$, written $dim(x)$,
is the number of its elements minus one. We also say that $x$ is a \emph{$p$-simplex} if $dim(x) = p$. 
    
Let $S$ be a finite set of simplexes.
A $p$-simplex in $S$ is a \emph{($p$-)face of $S$}.
A {\em ($p$-)facet of $S$} is a $p$-face of $S$ that is maximal for inclusion.
If $x$ and $y$ are two distinct faces of $S$ such that $x \subseteq y$, we say that \emph{$x$
is a face of $y$ (in $S$)}.
The {\it simplicial closure of $S$} is the set
$S^- = \{ y \subseteq x \; \mid \; y \not= \emptyset$ and $x \in S \}$.
The set $S$ is a {\it (simplicial) complex} if $S = S^-$.

Let $X$ be a complex.
The {\it dimension of $X$}, written $dim(X)$,
is the largest dimension of its simplices,
the {\it dimension of $\emptyset$} being defined to be $-1$. 
\newline
A $0$-face of $X$  is \emph{a vertex of $X$}, and a $1$-face of $X$ is \emph{an edge of $X$}.
\newline
A complex $X$ is a \emph{graph} if the dimension of $X$ is at most $1$. 

Let $X$ be a complex and let $S \subseteq X$. If $S$ is a complex, we say that $S$ is \emph{closed for $X$} or that $S$ is a \emph{subcomplex of $X$}.
We say that $S$ is \emph{open for $X$} or that $S$ is an \emph{open subset of $X$} if, for any $x \in S$, we have $y \in S$ whenever $x \subseteq y$ and $y \in X$.
If $X$ is a complex and $S \subseteq X$, we note that $S$ is closed for $X$ if and only if $X \setminus S$ is open for $X$.
In particular, $\emptyset$ and $X$ are both closed and open for $X$. 

\begin{remark}
The above definitions of open and closed sets correspond to an Alexandov topology \cite{alexandroff1937diskrete}.
It should be noted that the usual topology associated with a simplicial complex $X$ is the \emph{face poset $P$ of $X$},
that is, the set of faces of $X$ ordered by inclusion \cite{barmak2008simple}. The poset $P$ may be seen as the barycentric
subdivision of $X$. In contrast, in this paper, we consider directly the collection of closed sets
that are the subcomplexes of $X$.  
\end{remark}

Let $S$ be a finite set of simplexes.
Let $\pi = \langle x_0, \ldots, x_k \rangle$ be a sequence of elements of $S$. The sequence $\pi$ is a \emph{path in $S$ (from $x_0$ to $x_k$)}  if, for any $i \in [0,k-1]$,
either $x_{i} \subseteq x_{i+1}$ or  $x_{i+1} \subseteq x_{i}$.
We say that \emph{$S$ is connected} if, for any $x,y \in S$, there exists a path from $x$ to $y$ in $S$.
If $S \neq \emptyset$, we say that $T \subseteq S$ is a \emph{connected component of $S$} if $T$ is connected and maximal, with respect to set inclusion, for this property.

\begin{remark} \label{rem:path}
We observe that: 
\begin{itemize}
    \item If $X$ is a complex, then $X$ is connected if and only if, for any vertices $x,y \in X$, there exists a sequence $\langle x = x_0, \ldots, x_k = y \rangle$
of vertices of $X$ such that, for any $i \in [0,k-1]$, $x_i \not= x_{i+1}$ and $x_{i} \cup x_{i+1}$ is an edge of $X$. 
\item If $S$ is an open subset of a complex $X$, then $S$ is connected if and only if, for any facets $x,y$ of $S$, there exists a sequence $\langle x = x_0, \ldots, x_k = y \rangle$
of facets of $S$ such that, for any $i \in [0,k-1]$, $x_{i} \cap x_{i+1}$ is in $S$.
\end{itemize}
\end{remark}

The following simple definition of a covering pair (or a $p$-pair) will play an important role in the sequel of the paper: 
\begin{itemize}
    \item It will first allow us to define a free pair (Definition \ref{def:intro2}).
This corresponds to the operation of collapse
of a simplicial complex introduced by J.H.C. Whitehead \cite{whitehead1939simplicial},
which is a discrete analogue of a retraction, that is, a
continuous (homotopic) deformation of an object onto
one of its subsets. In Section \ref{sec:stacks}, free pairs for a simplicial complex will be extended to free pairs on stacks, which are maps on simplicial complexes (Definition \ref{def:stack1}). 
\item In Section \ref{sec:morse}, we introduce the notion of a flat pair, which is a special case of a covering pair.
This permits us to have a very simple and concise presentation of a Morse stack (Definition \ref{def:morse1}). Indeed,
a basic  link between Morse stacks and the collapse operation is straightforward (Proposition \ref{pro:morse1}).
Furthermore, the notions of a gradient vector field and a gradient path follow immediately from 
covering and flat pairs.  
\end{itemize}

\begin{definition}[Covering pair] \label{def:intro1}
Let $X$ be a complex and $x,y \in X$, with $dim(y) = p$. We say that $(x,y)$ is a \emph{covering pair of $X$} or a
\emph{$p$-pair of $X$} if $x$ is a face of $y$ and $dim(x) = p-1$.
\end{definition}

\begin{definition}[Free pair] \label{def:intro2}
Let $X$ be a complex and let $(x,y)$ be a $p$-pair of $X$.
We say that  $(x,y)$ is a \emph{free ($p$-)pair of $X$} if $y$ is the only face of $X$ that contains~$x$.
\end{definition}

Thus, if $(x,y)$ is a free pair of $X$, we have necessarily $dim(x) = dim(y) - 1$.
Furthermore, we observe that $y$ is necessarily a facet of $X$.

If $(x,y)$ is a free $p$-pair of a complex $X$, then $Y = X \setminus \{x,y \}$
is {\em an elementary ($p$-)collapse of $X$}.
We say that
$X$ {\em collapses (resp. $p$-collapses) onto $Y$},
if there exists a sequence
$\langle X=X_0,...,X_k = Y \rangle$ such that
$X_i$ is an elementary collapse (resp. elementary $p$-collapse) of $X_{i-1}$, $i \in [1,k]$.
If, furthermore, $Y$ has no free pair (resp. free $p$-pair), then $Y$ is an {\em ultimate collapse (resp. ultimate $p$-collapse) of $X$}.
A complex
$X$ is {\em collapsible} if $X$ collapses onto a single vertex.

\section{Simplicial stacks} \label{sec:stacks}

Let $X$ be a simplicial complex, and let $F$ be a map from $X$ to $\bb{Z}$. If $x$ is a face of $X$, the value $F(x)$
is called the \emph{altitude} of $F$ at $x$. For any $\s \in \bb{Z}$, we write $F[\s] = \{x \in X \; \mid \; F(x) \geq \s \}$,
$F[\s]$ is the \emph{$\s$-section} of $F$.
We say that $F$ is a \emph{(simplicial) stack} on $X$ if
any $\s$-section of $F$ is a simplicial complex. In other words, any $\s$-section of $F$ is a closed set for $X$. 

Let $F$ be a map from a complex $X$ to $\bb{Z}$. It may be easily seen that $F$ is a simplicial stack if and only if,
for any $x,y \in X$ such that $x \subseteq y$, we have $F(x) \geq F(y)$. Also, a map $F$ is a simplicial stack if and only if, for any  covering pair $(x, y)$ in~$X$,
we have $F(x) \geq F(y)$. 

Now, we extend the notion of free pairs of simplicial complexes to simplicial stacks.
This extension allows us to introduce some fundamental discrete homotopic transforms of these maps.


\begin{definition} \label{def:stack1}
Let $F$ be a simplicial stack on $X$. We set $\s_m = min \{F(x) \; \mid \; x \in X \}$. Let $(x, y)$ be a $p$-pair of~$X$. We say that
$(x,y)$ is a \emph{free ($p$-)pair of $F$} if $(x,y)$ is a free ($p$-)pair of $F[\s]$, with $\s = F(x)$ and $\s > \s_m$.
\end{definition}

If $(x,y)$ is a free pair of $F$, then both $x$ and $y$ are in $F[\s]$, with $\s = F(x)$. Thus $F(y) \geq F(x)$.
Also, we have $x \subseteq y$. Then, since $F$ is a stack, we have $F(y) \leq F(x)$.
Thus, we have $F(x) = F(y)$ whenever $(x,y)$ is a free pair of~$F$. 

\begin{figure}
    \centering
    \includegraphics[width=.8\columnwidth]{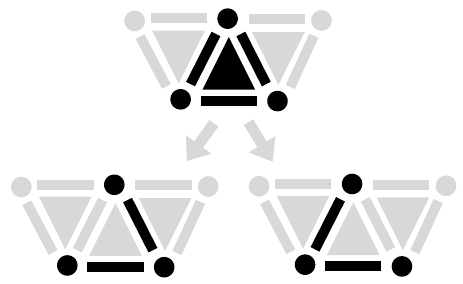}
    \caption{A simplicial stack $F$ with two values (0 in gray, and 1 in black) on a complex $X$, and two different elementary collapses of $F$.}
    \label{fig:collapse}
\end{figure}

Let $F$ be a simplicial stack on a complex $X$. 
\begin{enumerate}
    \item Let $(x,y)$ be a free ($p$-)pair of $F$.
Let $G$ be the map such that $G(z) = F(z) -1$ if $z = x$ or $z = y$,
and $G(z) = F(z)$ if $z \in X \setminus \{x,y\}$. We can see that $G$ is a simplicial stack on $X$. The map $G$ is called an \emph{elementary ($p$-)collapse of $F$
through $(x,y)$}, or, simply, an \emph{elementary ($p$-)collapse of $F$}. Fig.~\ref{fig:collapse} shows two different elementary collapses of a stack $F$.
\item If $G$ is the result of a sequence of elementary collapses (resp. $p$-collapses) of $F$, then we say that $F$
\emph{collapses (resp. $p$-collapses)} onto $G$. 
\item If $F$ collapses (resp. $p$-collapses) onto a stack $G$ that has no free pair (resp. no free $p$-pair),
then $G$ is an \emph{ultimate collapse (resp. ultimate $p$-collapse)} of~$F$.
\end{enumerate}

We conclude this section by giving a definition of a (regional) minimum of a stack, which plays a
crucial role in the notion of a watershed.

Let $F$ be a simplicial stack on a complex $X$ and let $\s \in \bb{Z}$. 
A subset $A$ of $X$ is a \emph{minimum of $F$ (at altitude $\s$)}
if $A$ is a connected component of
$X \setminus F[\s +1]$ and $A \cap (X \setminus F[\s]) = \emptyset$.
The \emph{divide of $F$} is the set composed of all faces of $X$ that are not in a minimum of $F$.
Note that any minimum of $F$  is an open set for $X$, and the divide of $F$ is a simplicial complex.

\section{Normal pseudomanifolds}
\label{sec:normalpseudomanifolds}

The results of this paper hold true in a
large family of $n$-dimensional discrete spaces, namely
the normal pseudomanifolds. This section provides a presentation of these spaces.

Let $S$ be a finite set of simplexes.  A \emph{ strong $p$-path in $S$ (from $x_0$ to $x_k$)} is a path
$\langle x_0,...,x_k \rangle$ such that, for each $i \in [0,k-1]$,
either $(x_i,x_{i+1})$ is a $p$-pair, or $(x_{i+1}, x_i)$ is a $p$-pair.
The set $S$ is \emph{($d$-)pure} if all facets of~$S$ have the same dimension $d$.
If $S$ is $d$-pure, we say that a strong $d$-path in $S$ is a \emph{strong path in $S$}. Also, we say that $S$
 is \emph{strongly connected} if, for any two facets $x,y$ in $S$, there exists a strong path in $S$ from $x$ to $y$. A subset $T$ of $S$ is a \emph{strong connected component of $S$} if $T$ is strongly connected and maximal, with respect to set inclusion, for this property.

\begin{definition}[Normal pseudomanifold] \label{def:normal}
A connected and $d$-pure complex $X$, with $d \geq 1$, is a \emph{normal pseudomanifold} (or a \emph{normal $d$-pseudomanifold}) if: 
\begin{enumerate}
    \item The complex $X$ is \emph{non-branching}, that is, each $(d - 1)-$face of $X$ is included in exactly two $d$-faces of $X$. 
    \item The complex $X$ is \emph{strictly connected}, that is, each connected open subset of~$X$ is strongly connected.
\end{enumerate}
\end{definition}

Recall that a pure complex $X$ is a \emph{pseudomanifold} if it is non-branching and strongly connected \cite{massey91}.
Since the very set $X$ is open for a complex $X$, we see that any normal pseudomanifold is a pseudomanifold. 
\newline
In fact, the above  definition is a new definition for a normal pseudomanifold.
In Appendix \ref{app:pseudo}, we show that it is equivalent to the classical definition
\cite{BAGCHI2008327}, \cite{BASAK2020107035}, \cite{DBNN2008},
which consists of a local condition together with the conditions that must be satisfied by a pseudomanifold.

\begin{figure*}
\begin{center}
 \begin{tabular}{ccc}
   \includegraphics[width=.29\textwidth]{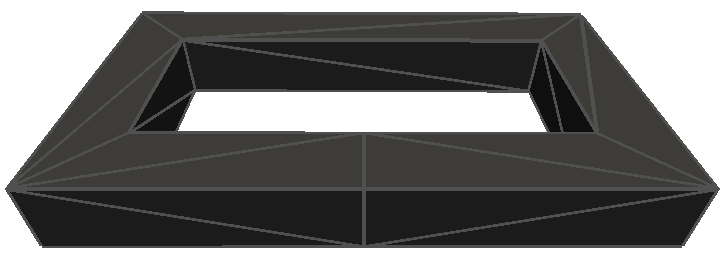} &
   \includegraphics[width=.29\textwidth]{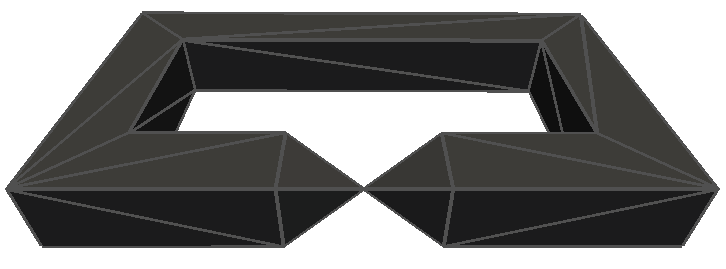} &
   \includegraphics[width=.29\textwidth]{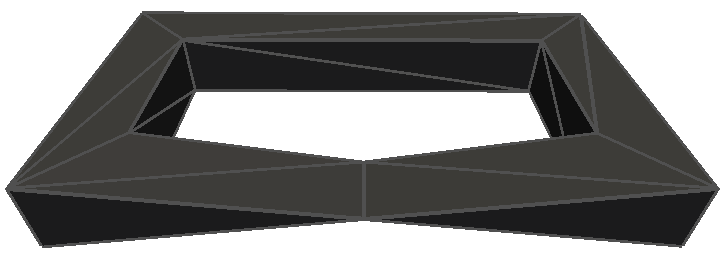} \\
   (a) & (b) & (c)
 \end{tabular}
  \caption{\label{fig:pseudo1}
   (a): A normal pseudomanifold, which is a torus,
   (b): A pseudomanifold, which is a pinched torus, and where the pinch face is a vertex,
   (c): A pinched torus where the pinch face is a segment. This is not a pseudomanifold.
 }
\end{center}
\end{figure*}

Let us consider Fig. \ref{fig:pseudo1}. The triangulated  torus (a) is a normal pseudomanifold. The triangulated pinched torus (b) is a pseudomanifold that
is not normal: the set of all faces containing the pinch vertex is a connected open subset of the complex,
but this set is not strongly connected.
The triangulated pinched torus (c) is not a pseudomanifold: the pinch segment does not satisfy the non-branching condition.

Let $X$ be a proper subcomplex of a $d$-pseudomanifold $M$. We can see that, if $dim(X) = d$,
the complex $X$ has necessarily a boundary, that is,
there exists a free $d$-pair for $X$. By induction, it means that the dimension of an ultimate
$d$-collapse of $X$ is necessarily $d-1$. See \cite{cousty2009collapses} for a formal proof. 

\bigskip

{\bf Important notations.} In the sequel of the paper: 
\begin{itemize}
    \item We denote by $\bb{S}$ the collection of all simplicial complexes.
    \item If $X \in \bb{S}$, we write $Y \preceq X$  whenever  $Y \subseteq X$ and $Y \in \bb{S}$, that is, whenever $Y$ is a subcomplex of $X$. 
    \item If $X \in \bb{S}$, and $S \subseteq X$, we write $S \sqsubseteq X$ whenever $S$ is an open subset of $X$.
    \item We denote by $\bb{M}$ (resp. $\bb{M}_d$) the collection of all normal pseudomanifolds (resp. all normal $d$-pseudomanifolds).
    \item If $F$ is a stack on $M \in \bb{M}$, the notation $\mathfrak{min}(F)$ stands for the union of all minima of $F$, and we write  $\mathfrak{div}(F)$ for the divide of $F$. Thus, we have $\mathfrak{div}(F) = M \setminus \mathfrak{min}(F)$, $\mathfrak{div}(F) \preceq M$, and $\mathfrak{min}(F) \sqsubseteq M$.
\end{itemize}

We are now ready to introduce the notion of a watershed in the context of simplicial complexes.
We will consider a normal pseudomanifold $M \in \bb{M}$ and a simplicial stack $F$ on $M$, the map
$F$ may be seen as a ``topographical relief'' on the space $M$.
A simplicial complex $W \preceq M$ may be a watershed of $F$ if $W$ ``separates the minima of $F$''.
It means that if $A \sqsubseteq M$ is a connected component of $M \setminus W$, then $A$  contains one and only one
set $B \sqsubseteq M$ that is a minimum of $F$.
Furthermore, the complex $W$ must satisfy a ``drop of water principle'': from each face of $W$, we may reach at least two distinct minima
of $F$ by following a descending path.
Each connected component $A$ of $M \setminus W$ will correspond to a ``catchment basin'' of the map $F$.

\section{Watersheds}
\label{sec:watersheds}

Let $X \in \bb{S}$, and let $A \sqsubseteq X$, with $A \not= \emptyset$. 
We say that $B \sqsubseteq X$ is an \emph{extension of $A$} if $A \subseteq B$, and if each connected component of $B$ includes exactly one connected component of $A$. We also say that $B$ is an extension of~$A$ if $A = B = \emptyset$.

 \begin{proposition} \label{pro:water0}
Let $M \in \bb{M}$ and let $A \sqsubseteq M$. 
\begin{enumerate}
\item A subset $S$ of $A$ is a connected component of $A$ if and only if $S$ is a strong connected component of $A$.
 \item Let $B \sqsubseteq M$, with $A \subseteq B$.
The set $B$ is an extension of $A$ if  and only if  each strong connected component of $B$
includes exactly one strong connected component of $A$.
\end{enumerate}
 \end{proposition}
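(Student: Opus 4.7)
The plan is to reduce both statements to the defining property of normal pseudomanifolds, namely that every connected open subset is strongly connected, combined with two elementary facts: a connected component of an open subset of a complex is itself open, and every strong path is, by definition, a path. Part~(1) is then short, and part~(2) follows immediately.

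For the first direction of part~(1), let $S$ be a connected component of $A$. For any $x \in S$ and any $y \in M$ with $x \subseteq y$, openness of $A$ gives $y \in A$, and the one-step path $\langle x, y \rangle$ places $y$ in the same connected component of $A$ as $x$, so $y \in S$. Hence $S \sqsubseteq M$. Being open and connected, $S$ is strongly connected by the normal pseudomanifold condition. For the maximality step, suppose $T \subseteq A$ is strongly connected with $S \subseteq T$. Since strong paths are paths, $T$ is connected, so $T$ lies inside a single connected component of $A$; that component contains $S$ and hence must be $S$ itself, giving $T = S$. Conversely, let $S$ be a strong connected component of $A$. Then $S$ is strongly connected, hence connected, so $S$ is contained in some connected component $C$ of $A$. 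By the direction just established, $C$ is itself a strong connected component of $A$, and the maximality of $S$ then forces $S = C$.

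Part~(2) is immediate. Since $A \sqsubseteq M$ and $B \sqsubseteq M$, applying part~(1) to both identifies the connected components of $A$ (respectively $B$) with the strong connected components of $A$ (respectively $B$), so the inclusion condition phrased in one vocabulary is the same statement as the inclusion condition phrased in the other. The only point requiring care throughout is the general fact that a connected subset of $A$ lies inside a single connected component of $A$; this is standard from the definition of connected component as a maximal connected subset, and it is what reduces the maximality step in part~(1) to a one-line argument.
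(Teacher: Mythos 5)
Your proof is correct and follows essentially the same route as the paper's: both directions of part~(1) rest on the facts that a connected component of an open set is open (hence strongly connected by normality of $M$) and that strong paths are paths, with maximality handled by the same containment argument; part~(2) is then a direct consequence of part~(1) in both versions. Your write-up merely spells out a few steps the paper leaves implicit, such as the openness of a connected component and the standard fact that a connected subset lies in a single component.
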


\begin{proof}
 1) It may be seen that, if $S$ is a connected component of the open set $A$, then $S$ is necessarily an open set for $M$.
 Since $M$ is a normal pseudomanifold, we deduce that $S$ is strongly connected. Furthermore, $S$ is a strong connected component of $A$,
 otherwise $S$ would not be a maximal connected subset of $A$. Now, if $S$ is a strong connected component of $A$,
 then $S$ is a connected subset of $A$. Again, we see that $S$ is a connected component of $A$. Otherwise, $S$
 would be a proper subset of a connected open subset $T$ of $A$. Since $M$ is a normal pseudomanifold, this subset $T$ would be strongly connected, and $S$ would not be a maximal strongly connected subset of $A$. \\
 2) is a direct consequence of 1). 
 \end{proof}

Let $X \in \bb{S}$ and $Y \preceq X$. Let $A \sqsubseteq X$, with $A \not=\emptyset$.
 We say that $Y$ is a \emph{cut for $A$}, if $X \setminus Y$ is an extension of $A$, and if $Y$ is minimal for this property.
 That is, if $Z \preceq Y$, and if $X \setminus Z$ is an extension of $A$, then we have necessarily $Z = Y$.

 \begin{proposition} [from {\cite{cousty2009collapses}}] \label{pro:water2}
Let $M \in \bb{M}_d$, $A \sqsubseteq M$ and $X \preceq M$, with $A \not= \emptyset$.
If $X$ is a cut for $A$, then the complex $X$ is either empty
or a pure $(d - 1)$-complex.
 \end{proposition}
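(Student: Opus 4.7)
The plan is to prove the statement by contradiction, showing that if $X$ is a non-empty cut then every facet of $X$ has dimension exactly $d-1$, where $d$ is the dimension of the ambient pseudomanifold $M$. Since $A \neq \emptyset$ and $A \subseteq M \setminus X$, the complex $X$ is a proper subcomplex of $M$. We will rule out facets of dimension $d$ and facets of dimension strictly less than $d-1$ separately, each time producing a strict subcomplex $Z \preceq X$ with $M \setminus Z$ still an extension of $A$, contradicting the minimality of $X$.

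Suppose first that $X$ has a $d$-facet, so $\dim(X) = d$. The observation recalled just before the ``important notations'' paragraph provides a free $d$-pair $(w,z')$ of $X$. Since $M$ is non-branching, $w$ is contained in exactly two $d$-faces of $M$, namely $z'$ and some $z''$, and the freeness of $(w,z')$ in $X$ forces $z'' \in M \setminus X$. Setting $Z = X \setminus \{w,z'\}$, the set $M \setminus Z = (M \setminus X) \cup \{w,z'\}$ is open in $M$, and within $M \setminus Z$ the two new simplices $w$ and $z'$ are face-adjacent only to each other and to $z''$; hence $\{w,z'\}$ is simply absorbed into the connected component of $M \setminus X$ containing $z''$. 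No components are merged, $M \setminus Z$ is still an extension of $A$, and the minimality of $X$ is contradicted.

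Suppose next that $X$ has a facet $z$ with $\dim(z) < d-1$, and set $Z = X \setminus \{z\}$. Since $z$ is a facet of $X$, every strict coface of $z$ in $M$ lies in $M \setminus X$, so the open star $\mathrm{St}^{o}(z) = \{y \in M \mid z \subsetneq y\}$ is a subset of $M \setminus X$ and $M \setminus Z = (M \setminus X) \cup \{z\}$ is open in $M$. It then remains only to verify that $z$ does not merge connected components of $M \setminus X$, which amounts to showing that $\mathrm{St}^{o}(z)$ lies inside a single connected component of $M \setminus X$; once this is granted, $M \setminus Z$ is again an extension of $A$ and minimality is violated.

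The main obstacle is precisely this last connectivity statement: strict connectivity alone only yields that each connected component of $\mathrm{St}^{o}(z)$ is strongly connected, not that $\mathrm{St}^{o}(z)$ is itself connected. This is where normality (as opposed to mere pseudomanifold-hood) is essential: the pinched torus of Fig.~\ref{fig:pseudo1}(b) is a pseudomanifold whose pinch vertex has a disconnected open star, and the above argument would fail there. The cleanest route is to invoke the equivalence, established in Appendix~\ref{app:pseudo}, between our definition of a normal pseudomanifold and the classical link-connectivity definition, from which the connectedness of $\mathrm{St}^{o}(z)$ for any face $z$ of codimension at least $2$ follows directly.
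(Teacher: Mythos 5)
The paper does not actually prove Proposition~\ref{pro:water2}: it is imported verbatim from \cite{cousty2009collapses}, so there is no in-paper argument to compare against. Judged on its own, your proof is correct and complete. The two-pronged structure is the right one: a $d$-facet is excluded by the boundary observation (a proper $d$-dimensional subcomplex of a $d$-pseudomanifold admits a free $d$-pair, so an elementary collapse yields a strictly smaller subcomplex whose complement is still an extension of $A$, because non-branching pins the pair onto the single component containing the second coface $z''$), and a facet of codimension $\geq 2$ is excluded by deleting it and using connectedness of its open star. You correctly identify the latter connectedness as the only point where normality, rather than mere pseudomanifold-hood, is used. One small remark: the detour through Appendix~\ref{app:pseudo} is not needed. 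The full star $st(z,M)$ is a connected open subset of $M$ (any two of its faces are joined through $z$ itself), so strict connectedness makes it strongly connected; a strong path between two $d$-faces of $st(z,M)$ only visits faces of dimension $\geq d-1 > \dim(z)$, hence stays in $st^*(z,M)$, which is therefore connected. This gives the key step directly from Definition~\ref{def:normal}, which is slightly cleaner than invoking the link-condition equivalence, but your version is equally valid.
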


\begin{remark}
It could be seen that the previous result no longer holds if we consider arbitrary pseudomanifolds instead of normal pseudomanifolds.
For example, the pinched vertex of the pinched torus of Figure \ref{fig:pseudo1}.(b) could be in a cut. 
\newline
In fact, it is possible to bypass this situation by considering only strong paths between faces, as it is done in~\cite{cousty2009collapses}.
In this paper, in order to handle general connectedness and arbitrary paths, we have made the choice to settle our results
in normal pseudomanifolds.
\end{remark}

Let $F$ be a stack on $M \in \bb{M}$.
If $\pi = \langle x_0,\dots,x_{k} \rangle$ is a path in $M$, we say that $\pi$ is
\emph{ascending for $F$} (resp. \emph{descending for $F$}) if, for any $i \in [0,k-1]$,
we have $F(x_i) \leq F(x_{i+1})$ (resp.  $F(x_i) \geq F(x_{i+1})$).

\begin{definition}[Watershed] \label{def:water3}
Let $F$ be a stack on $M \in \bb{M}$ and let $X \preceq M$ be a cut for $\mathfrak{min}(F)$.
We say that~$X$ is a \emph{watershed} of $F$ if, for each $x \in X$, there exist two strong paths $\pi_1 = \langle x_0, \dots,x_k \rangle$
and $\pi_2 = \langle y_0, \dots, y_l \rangle$ in $M \setminus X$, such that: 
\begin{itemize}
    \item $x \subseteq x_0$ and $x \subseteq y_0$;
    \item $\pi_1$ and $\pi_2$ are descending paths for $F$; and
    \item $x_k$ and $y_l$ are simplices of two distinct minima of $F$.
\end{itemize}
\end{definition}

Let $M \in \bb{M}$, $F$ be a stack on $M$, and let $W$ be a watershed for $F$.
We say that $B \sqsubseteq M$ is a \emph{(catchment) basin of $W$} if $B$ is a connected component of
$\overline{W} = M \setminus W$. Since $W$ is a cut for $\mathfrak{min}(F)$, 
\begin{itemize}
    \item any catchment basin $B$ of $W$ contains a unique minimum $A$ of $F$, we say that $A$ is the 
    \emph{minimum of $B$}; 
    \item any minimum $A$ of $F$ is included in a unique basin $B$ of $W$, we say that $B$ is the 
    \emph{catchment basin of $A$}.
\end{itemize}

\begin{proposition} [from {\cite{cousty2009collapses}}] \label{pro:water4}
Let $M \in \bb{M}_d$, $F$ be a stack on $M$, and $W$ be a watershed of $F$.
Then, for any $d$-face $x$ in $M$, there exists a strong path in $M \setminus W$ from $x$
to a $d$-face of a minimum of $F$, that is descending for $F$.
\end{proposition}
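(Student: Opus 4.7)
The plan is to proceed by strong induction on the altitude $\mu = F(x)$. By Proposition~\ref{pro:water2}, the complex $W$ is empty or pure $(d-1)$, so $W$ contains no $d$-face; in particular $x \in M \setminus W$, inside some basin of $W$. The base case is $F(x) = \min \{F(y) \mid y \in M\}$, which forces $x$ to be a simplex of a minimum of $F$, and the empty path works. For the inductive step we may assume that $x$ is not itself in any minimum, and aim to construct a descending strong $d$-path in $M \setminus W$ from $x$ to some $d$-face of strictly lower altitude than $\mu$; the induction hypothesis then finishes the proof.

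First I would produce a descending path without worrying about $W$. Let $P$ be the connected component of the open set $M \setminus F[\mu+1]$ that contains $x$. Then $P$ is open (connected components of open subsets of $M$ stay open), hence $d$-pure and, by Definition~\ref{def:normal}, strongly connected. Since $x$ is not in a minimum, $P$ is not a minimum at altitude $\mu$, so $P$ meets $M \setminus F[\mu]$; combined with the stack inequality this yields a $d$-face $z' \in P$ with $F(z') < \mu$. Pick a strong $d$-path from $x$ to $z'$ inside $P$. Every face along it has altitude $\leq \mu$, and a short check using the stack inequality $F(u) \geq F(v)$ whenever $u \subseteq v$ shows that every face preceding the first $d$-face of altitude $<\mu$ has altitude exactly $\mu$ and that this first drop happens at a $d$-face. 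Truncating at that drop yields a descending strong $d$-path $\pi' = \langle x = z_0, z_1, \ldots, z_i\rangle$ in $P$ with $i$ even, $F(z_0) = \cdots = F(z_{i-1}) = \mu$, and $F(z_i) < \mu$.

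It remains to push $\pi'$ into $M \setminus W$. The $d$-faces $z_0, z_2, \ldots, z_i$ are automatically there. If in addition every $(d-1)$-face $z_1, z_3, \ldots, z_{i-1}$ lies in $M \setminus W$, then $\pi'$ itself is a descending strong $d$-path in $M \setminus W$ from $x$ to $z_i$, and the induction hypothesis applied to $z_i$ concludes. Otherwise, let $j$ be the smallest odd index with $z_j \in W$. The prefix $\langle z_0, \ldots, z_{j-1}\rangle$ then sits in $M \setminus W$ and is descending (all altitudes equal $\mu$). Apply Definition~\ref{def:water3} to $z_j$: there exist two descending strong $d$-paths $\pi_1, \pi_2$ in $M \setminus W$ starting at $d$-faces $x_0, y_0$ with $z_j \subseteq x_0$ and $z_j \subseteq y_0$, ending in two distinct minima. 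The non-branching property forces $z_j$ to be contained in exactly the two $d$-faces $z_{j-1}$ and $z_{j+1}$, so $\{x_0, y_0\} \subseteq \{z_{j-1}, z_{j+1}\}$. Any strong $d$-path in $M \setminus W$ stays inside a single basin, and by the cut property each basin contains a unique minimum, so two paths starting at the same $d$-face cannot reach distinct minima; hence $x_0 \neq y_0$, and one of the two paths, say $\pi_1$, starts at $z_{j-1}$. Concatenating $\langle z_0, \ldots, z_{j-1}\rangle$ with $\pi_1$, and extending the endpoint (if it is a $(d-1)$-face) to an adjacent $d$-face inside the same open minimum, yields a descending strong $d$-path in $M \setminus W$ from $x$ to a $d$-face of a minimum.

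The main obstacle I expect is this last situation: the natural descending path $\pi'$ built inside $P$ may cross $W$, and one must intercept it at the first such crossing $z_j$ and redirect using Definition~\ref{def:water3}. The redirection works only because non-branching restricts $z_j$ to the two super-$d$-faces $z_{j-1}, z_{j+1}$, and because the cut property, together with the distinctness of the two minima reached from $z_j$, forces the starting $d$-faces $x_0, y_0$ of the two drop-of-water paths to be distinct, so that one of them is exactly the $d$-face $z_{j-1}$ sitting at the end of our prefix.
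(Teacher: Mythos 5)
The paper does not actually prove Proposition~\ref{pro:water4}: it imports the result from Cousty et al., so there is no in-paper argument to compare against, and your proof has to stand on its own. It does. The induction on the altitude $F(x)$, the construction of a descending strong path inside the connected component $P$ of $M \setminus F[\mu+1]$ containing $x$ (using that $P$ is open, hence $d$-pure and, by normality, strongly connected, and that $P$ meets $M\setminus F[\mu]$ because $x$ is not in a minimum), the truncation at the first altitude drop (which must occur at a $d$-face by the stack inequality), and the redirection at the first $(d-1)$-face of $W$ encountered, via the drop-of-water condition of Definition~\ref{def:water3}, all go through. The key step --- that the two starting $d$-faces $x_0,y_0$ of the two drop-of-water paths must be distinct, because a strong path in $M\setminus W$ cannot leave its basin and each basin contains a unique minimum of $F$ --- is exactly the right use of the cut property, and combined with non-branching it forces one of the two paths to start at $z_{j-1}$. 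Two small points to tighten, neither of which is a gap: (i) Proposition~\ref{pro:water2} is invoked with $A=\mathfrak{min}(F)$, which requires $\mathfrak{min}(F)\neq\emptyset$; this holds because every connected component of $M\setminus F[\lambda_m+1]$, where $\lambda_m$ is the global minimum altitude, is a minimum. (ii) Your phrase ``$z_j$ is contained in exactly the two $d$-faces $z_{j-1}$ and $z_{j+1}$'' tacitly assumes $z_{j-1}\neq z_{j+1}$; if the chosen strong path backtracks these could coincide, but the conclusion you actually use --- that $z_{j-1}\in\{x_0,y_0\}$ --- follows from non-branching together with $x_0\neq y_0$ in either case.
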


From the previous result, we easily derive the following proposition.

\begin{proposition} \label{pro:water5}
Let $M \in \bb{M}$, $F$ be a stack on $M$, and $W$ be a watershed of $F$.
Let $B$ be the catchment basin of a minimum $A$ of $F$. Then, for any $x \in B$, there exists a descending path in $B$ from $x$
to a face of $A$.
\end{proposition}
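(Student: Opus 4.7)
The plan is to reduce Proposition \ref{pro:water5} to Proposition \ref{pro:water4} by lifting an arbitrary face $x \in B$ to a $d$-facet above it, applying the $d$-dimensional result, and then arguing that the resulting path stays inside the single basin $B$ and lands in its unique minimum $A$.

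First, I would pick a $d$-facet to start from. Since $M \in \bb{M}_d$ for some $d \geq 1$ and $M$ is $d$-pure, every face of $M$ is contained in some $d$-face. Choose any $d$-face $y$ of $M$ with $x \subseteq y$. Because $B \sqsubseteq M$ (the basin is open for $M$) and $x \in B$, the face $y$ belongs to $B$. Moreover, since $F$ is a stack, $x \subseteq y$ yields $F(y) \leq F(x)$, so the one-step path $\langle x, y \rangle$ is a descending path in $B$.

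Next, I would invoke Proposition \ref{pro:water4} on $y$: there exists a strong path $\pi = \langle y = z_0, z_1, \dots, z_k \rangle$ in $M \setminus W$, descending for $F$, such that $z_k$ is a $d$-face of some minimum of $F$. I then need to show two things: (i) all faces of $\pi$ lie in $B$, and (ii) the minimum containing $z_k$ is precisely $A$. For (i), recall that a strong path is, in particular, a path in the sense of Section \ref{sec.maths}, so consecutive elements $z_i, z_{i+1}$ satisfy $z_i \subseteq z_{i+1}$ or $z_{i+1} \subseteq z_i$; hence the set $\{z_0,\dots,z_k\}$ is a connected subset of $M \setminus W$. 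Since $B$ is a connected component of $M \setminus W$ containing $z_0 = y$, maximality forces $\{z_0,\dots,z_k\} \subseteq B$. For (ii), recall from the discussion following Definition \ref{def:water3} that each basin of $W$ contains a unique minimum of $F$; since $z_k \in B$ and $z_k$ belongs to some minimum, that minimum must be $A$.

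Finally, concatenating gives the desired path $\langle x, y, z_1, \dots, z_k \rangle$, which lies in $B$, is descending for $F$ (since the initial step is descending and the remainder is descending by construction), and ends at a face of $A$. The only mildly delicate point is the connectedness argument in step (i), which relies on the observation that strong paths are ordinary paths; everything else is a bookkeeping exercise combining the openness of $B$ with the uniqueness of the minimum inside a basin.
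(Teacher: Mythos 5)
Your proof is correct and follows exactly the route the paper intends: the paper gives no explicit proof, stating only that the proposition is "easily derived" from Proposition~\ref{pro:water4}, and your argument is precisely that derivation (lift $x$ to a $d$-face of the open set $B$, apply Proposition~\ref{pro:water4}, use maximality of the connected component $B$ to keep the path inside it, and use uniqueness of the minimum in a basin to identify the endpoint with $A$). All the steps check out, including the openness of $B$ and the fact that a strong path is in particular an ordinary connected path.
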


The two following results are crucial for linking a watershed of a stack $F$ and the homotopy of $F$.

\begin{proposition} [from {\cite{cousty2009collapses}}] \label{pro:water6}
Let $M \in \bb{M}_d$. 
If $F$ is a stack on $M$ and $H$ is a collapse of $F$, then: 
\begin{enumerate}
    \item $\mathfrak{min}(H)$ is an extension of $\mathfrak{min}(F)$.
    \item $\mathfrak{div}(H)$ is a collapse of $\mathfrak{div}(F)$.
\end{enumerate}
\end{proposition}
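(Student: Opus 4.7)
The plan is to reduce to the case of a single elementary collapse by induction on the length of the collapse sequence, since extensions compose into extensions and collapses compose into collapses. So suppose $H$ is the elementary collapse of $F$ through a free pair $(x, y)$ with $\lambda = F(x) = F(y) > \lambda_m$. Then $H$ differs from $F$ only on $\{x, y\}$, where it takes the value $\lambda - 1$, so $H[\mu] = F[\mu]$ for every $\mu \neq \lambda$ while $H[\lambda] = F[\lambda] \setminus \{x, y\}$. The entire analysis therefore reduces to understanding what happens to the connected components of the relevant sections at altitudes $\lambda$ and $\lambda - 1$.

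A first observation I would establish is that $\{x, y\} \subseteq \mathfrak{div}(F)$. Since $y$ is a facet of $F[\lambda]$, every proper coface of $y$ in $M$ has $F$-value strictly below $\lambda$; if $\dim(y) = d$, the non-branching property supplies a second $d$-face $y''$ containing $x$ with $F(y'') < \lambda$. In either case the connected component of $y$ in $M \setminus F[\lambda + 1]$ contains a face of value less than $\lambda$, so it is not a minimum of $F$ at altitude $\lambda$.

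The crux of the argument is to identify the unique connected component of $M \setminus F[\lambda]$ adjacent to $\{x, y\}$. Because $(x, y)$ is a free pair of $F[\lambda]$, the proper cofaces of $x$ belonging to $M \setminus F[\lambda]$ are exactly the proper cofaces of $x$ in $M$ distinct from $y$, and they form an open subset of $M$. Using the normal pseudomanifold hypothesis, I would show this subset is path-connected: when $\dim(x) = d - 1$ it is a singleton by non-branching, and when $\dim(x) \leq d - 2$ it is in bijection with the link of $x$ with one vertex removed, which is path-connected because the link is a connected pseudomanifold of dimension at least one whose top-dimensional faces stay strongly connected after deleting a single vertex. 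Let $C$ be the connected component of $M \setminus F[\lambda]$ containing this adjacency set; then the component of $M \setminus H[\lambda]$ containing $\{x, y\}$ is exactly $C \cup \{x, y\}$, while every other component of $M \setminus H[\lambda]$ coincides with one of $M \setminus F[\lambda]$.

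The two claims then follow by a short case split on $C$. If every face of $C$ has value $\lambda - 1$, then $C$ is a minimum of $F$ at altitude $\lambda - 1$ and $C \cup \{x, y\}$ becomes a minimum of $H$ at the same altitude, yielding $\mathfrak{min}(H) = \mathfrak{min}(F) \cup \{x, y\}$; the enlarged component contains the unique $\mathfrak{min}(F)$-component $C$, proving the extension property. Moreover, the proper cofaces of $x$ other than $y$ all lie in $C \subseteq \mathfrak{min}(F)$, so $(x, y)$ is a free pair of $\mathfrak{div}(F)$, giving the elementary collapse $\mathfrak{div}(H) = \mathfrak{div}(F) \setminus \{x, y\}$. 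Otherwise $C$ contains a face of value below $\lambda - 1$, so $C \cup \{x, y\}$ is not a minimum of $H$ and $\{x, y\}$ stays in the divide; hence $\mathfrak{min}(H) = \mathfrak{min}(F)$ and $\mathfrak{div}(H) = \mathfrak{div}(F)$ and both conclusions are trivial. The main obstacle is the path-connectedness argument of the third paragraph, which is precisely where the normal pseudomanifold hypothesis is essential.
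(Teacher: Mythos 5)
The paper does not actually prove Proposition~\ref{pro:water6}: it is imported verbatim from \cite{cousty2009collapses}, so there is no in-text argument to compare yours against. Judged on its own, your proof is correct and is essentially a self-contained reconstruction of the argument in that reference (which works with strong connectivity in pseudomanifolds; your translation to ordinary connectivity is legitimate here by Proposition~\ref{pro:water0}). The reduction to a single elementary collapse, the computation of the sections ($H[\mu]=F[\mu]$ for $\mu\neq\lambda$ and $H[\lambda]=F[\lambda]\setminus\{x,y\}$), the observation that $\{x,y\}\subseteq\mathfrak{div}(F)$, and the final case split on whether the adjacent component $C$ is a minimum at altitude $\lambda-1$ are all sound; in the first case one does get $\mathfrak{min}(H)=\mathfrak{min}(F)\cup\{x,y\}$ and the freeness of $(x,y)$ in $\mathfrak{div}(F)$, and in the second case nothing changes. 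You also correctly identified the crux, namely that $st^*(x,M)\setminus\{y\}$ lies in a single connected component of $M\setminus F[\lambda]$, which is exactly where normality enters (consistent with the paper's remark that the statement fails on arbitrary complexes, cf.\ Fig.~\ref{fig:stackA}, and that the pinched torus causes trouble). Two small points deserve tightening if this were written out in full: (i) openness of $st^*(x,M)\setminus\{y\}$ requires noting that no proper face of $y$ other than $x$ itself can be a proper coface of $x$ (this follows from freeness of $(x,y)$ in $F[\lambda]$, since such a face would lie in $F[\lambda]$); and (ii) in the link-minus-a-vertex argument the case $\dim(x)=d-2$ is the only delicate one, because there the strong adjacencies of the link pass through vertices; it is saved by the fact that the link is then a simplicial cycle on at least three vertices, so deleting one $0$-simplex leaves it path-connected. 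With those details filled in, the proof is complete.
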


\begin{figure}
\begin{center}
 \begin{tabular}{cc}
   \includegraphics[width=.2\textwidth]{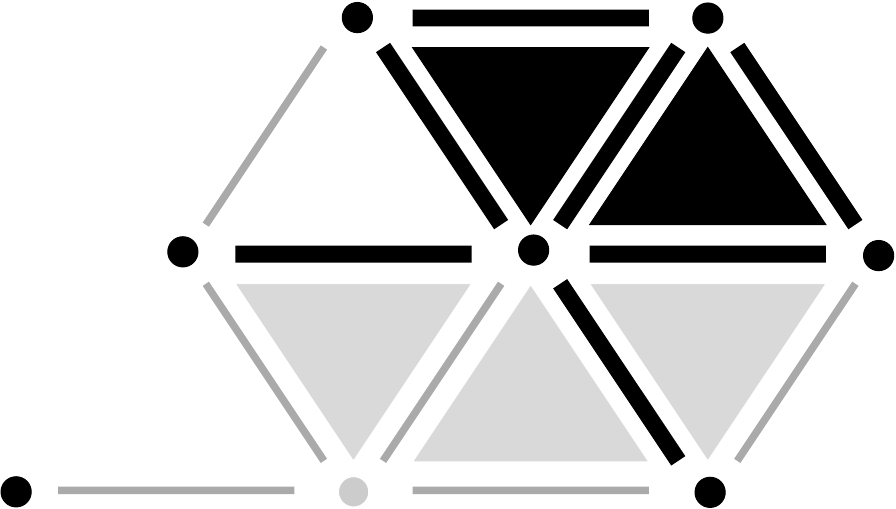} &
   \includegraphics[width=.2\textwidth]{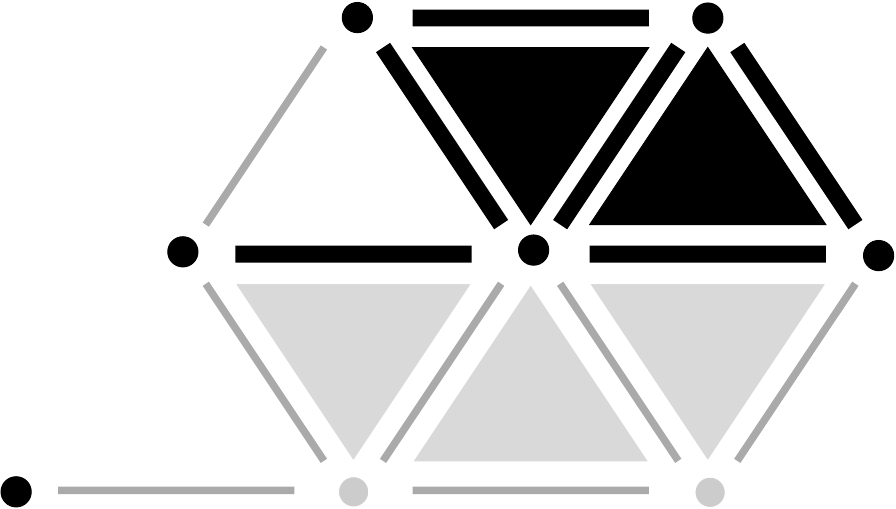} \\
   F &  H
 \end{tabular}
  \caption{\label{fig:stackA}
  Two stacks $F$ and $H$ with two levels: altitude $0$ (faces in light gray) and altitude $1$ (faces in black).
  The stack $H$ is an elementary collapse of $F$ (at altitude $1$). But $F$ has three minima whereas $H$ has only two.
  Thus, $\mathfrak{min}(H)$ is not an extension of $\mathfrak{min}(F)$.
}
\end{center}
\end{figure}

It should be noted that the previous proposition is no longer true if we consider a stack on an arbitrary  complex $X \in \bb{S}$ rather than a complex $M \in \bb{M}$.
See Fig.~\ref{fig:stackA}, which provides a simple counter-example.

\begin{proposition} [from {\cite{cousty2009collapses}}] \label{pro:water7}
Let $M \in \bb{M}_d$ and $F$ be a stack on $M$. 
\begin{enumerate}
    \item $F$ contains a free $d$-pair if and only if $\mathfrak{div}(F)$ contains a free $d$-pair. 
    \item If $dim(\mathfrak{div}(F)) = d$, then there exists a free $d$-pair for $F$.
\end{enumerate}
\end{proposition}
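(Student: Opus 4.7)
The plan is to prove that both ``$F$ has a free $d$-pair'' and ``$\mathfrak{div}(F)$ has a free $d$-pair'' are equivalent to the single condition $\dim(\mathfrak{div}(F))=d$; once this is established, Parts~1 and~2 follow immediately.

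Two easy observations serve as the starting point. First, if $(x,y)$ is a free $d$-pair of $F$ and $y'$ denotes the second coface of $x$ in $M$, then $F(x)=F(y)=\s>F(y')$, so $x\subset y'$ places $y'$ in the connected component of $y$ in $M\setminus F[\s+1]$, and since $F(y')<\s$ that component cannot be a minimum; hence $y\in\mathfrak{div}(F)$ and $\dim(\mathfrak{div}(F))=d$. Second, a free $d$-pair of $\mathfrak{div}(F)$ obviously contains a $d$-face of $\mathfrak{div}(F)$, so it also forces $\dim(\mathfrak{div}(F))=d$. Moreover $\mathfrak{min}(F)$ is always non-empty because the faces at altitude $\s_m$ form minima, so $\mathfrak{div}(F)$ is a proper subcomplex of $M$; as soon as $\dim(\mathfrak{div}(F))=d$, the assertion recalled just before the ``Important notations'' paragraph (a proper $d$-dimensional subcomplex of a $d$-pseudomanifold admits a free $d$-pair) produces a free $d$-pair of the complex $\mathfrak{div}(F)$.

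The heart of the proof is the converse implication, which I state as a lemma: \emph{if $F$ has no free $d$-pair, then every $d$-face of $M$ belongs to $\mathfrak{min}(F)$.} To prove it, fix a $d$-face $y$, set $\tau=F(y)$, and let $C$ be the connected component of $y$ in the open set $M\setminus F[\tau+1]$. Since $M$ is a normal pseudomanifold, $C$ is strongly connected, and its facets are precisely the $d$-faces of $M$ that lie in $C$. Suppose $C$ contained a face of altitude $<\tau$; since every face of $M$ is contained in some $d$-face of $M$ of altitude at most its own, this forces a $d$-face $y^{**}\in C$ with $F(y^{**})<\tau$. Take a strong $d$-path $\langle y=x_0,x_1,\ldots,x_k=y^{**}\rangle$ in $C$; it alternates $d$- and $(d-1)$-faces, and the inequalities $F(x_{2i+1})\geq F(x_{2i})$ and $F(x_{2i+1})\leq\tau$ (the latter because $x_{2i+1}\in C$) give inductively that $F(x_{2i+1})=\tau$ whenever $F(x_{2i})=\tau$. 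Let $2j$ be the smallest even index with $F(x_{2j})<\tau$; then $F(x_{2j-1})=F(x_{2j-2})=\tau>F(x_{2j})$. Because $M\in\bb{M}_d$, the two $d$-cofaces of $x_{2j-1}$ in $M$ are exactly $x_{2j-2}$ and $x_{2j}$, so $(x_{2j-1},x_{2j-2})$ is a free $d$-pair of $F$, contradicting the hypothesis. Hence $C$ is a minimum at altitude $\tau$, and $y\in\mathfrak{min}(F)$.

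With the lemma in hand the rest is assembly. For Part~1 the forward implication combines the easy observation above with the cited proper-subcomplex result, while the reverse implication is the contrapositive of the lemma applied to the $d$-face of $\mathfrak{div}(F)$ that any free $d$-pair of $\mathfrak{div}(F)$ must contain. Part~2 is then Part~1 prefaced by the cited result, which yields a free $d$-pair of $\mathfrak{div}(F)$ as soon as $\dim(\mathfrak{div}(F))=d$. The main obstacle is the lemma, and the subtle point is the use of a \emph{strong} $d$-path rather than an arbitrary path: in a general pseudomanifold one could descend through a pinch face like that of Figure~\ref{fig:pseudo1}(b) without ever producing the free $d$-pair the argument relies on, which is precisely why the normal-pseudomanifold hypothesis on $M$ is needed here.
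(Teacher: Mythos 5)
The paper does not prove Proposition~\ref{pro:water7}; it imports it from \cite{cousty2009collapses}, so there is no in-paper argument to compare yours against. Judged on its own, your proof is correct and self-contained. Reducing both sides of Part~1 to the single condition $\dim(\mathfrak{div}(F))=d$ is a clean organization, and the key lemma (no free $d$-pair implies every $d$-face lies in a minimum) is proved soundly: the component $C$ of $y$ in $M\setminus F[\tau+1]$ is open, hence $d$-pure and, by normality, strongly connected; the alternating strong $d$-path forces, at the first descent below $\tau$, a $(d-1)$-face $x_{2j-1}$ at altitude $\tau$ whose two cofaces (distinct, since their altitudes differ) have altitudes $\tau$ and $<\tau$, which is exactly a free $d$-pair of $F$ (note $\tau>\s_m$ because $F(x_{2j})<\tau$). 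The only steps you leave implicit are routine: that every face of a minimum has the same altitude (so a $d$-face can only belong to a minimum at its own altitude, which justifies the first observation), and that $\mathfrak{div}(F)$ is a \emph{proper} subcomplex before invoking the boundary result recalled in Section~\ref{sec:normalpseudomanifolds}, which you do address via non-emptiness of $\mathfrak{min}(F)$. Your closing remark correctly identifies where normality enters, namely in upgrading connectivity of the open set $C$ to strong connectivity so that the alternating path exists.
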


\begin{figure}
    \centering
    \includegraphics[width=.99\columnwidth]{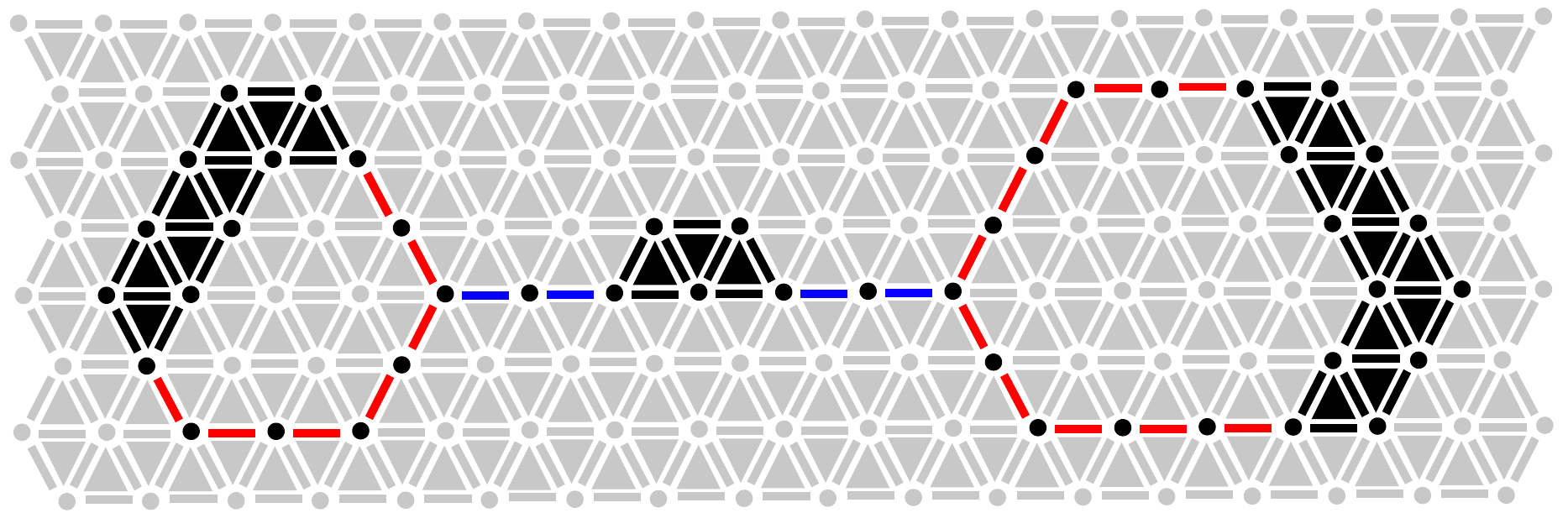}
    \caption{A simplicial stack $F$ with two values (0 in gray and 1 in black, red, and blue) on a subset of a normal 2-pseudomanifold. In blue and red, the separating faces. In red, the biconnected faces.}
    \label{fig:separating}
\end{figure}

Let $F$ be a stack on $M \in \bb{M}_d$ and $x$ be a $(d-1)$-face of $M$.
Let $y$, $z$ be the two $d$-faces containing $x$. We say that $x$
 is \emph{(locally) separating for $F$} if $F(y) < F(x)$ and $F(z) < F(x)$.
 We say that $x$ is \emph{biconnected for $F$}
if $y$ and $z$ belong to distinct minima of $F$. See Fig.~\ref{fig:separating} for an illustration of these two notions.

\begin{definition} \label{def:water4}
Let $F$ be a stack on $M \in \bb{M}_d$. Let $X \preceq M$.
We say that $X$ is a \emph{cut by collapse of $F$}, or a
\emph{$\mathcal{C}$-watershed of $F$},  if there
exists an ultimate $d$-collapse~$H$ of $F$ such that $X$ is the
simplicial closure of the set of all faces of $M$ that are
biconnected for $H$.
\end{definition}

\begin{figure}
\begin{center}
\begin{tabular}{c}
    \includegraphics[height=.2\textwidth]{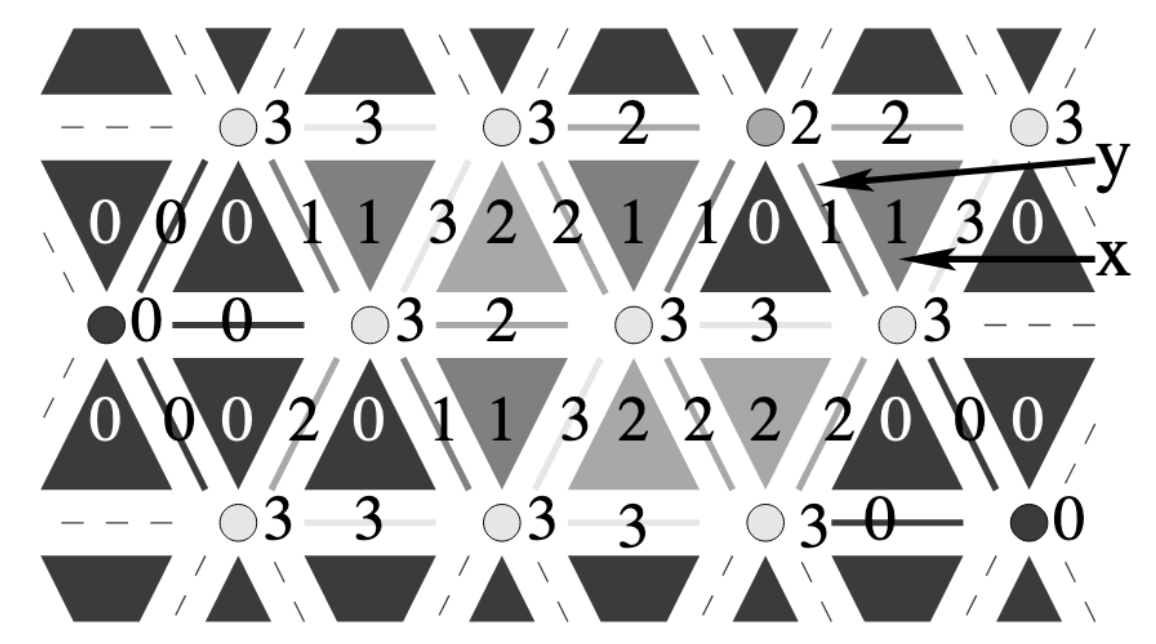} \\
    (a)\\
    \includegraphics[height=.2\textwidth]{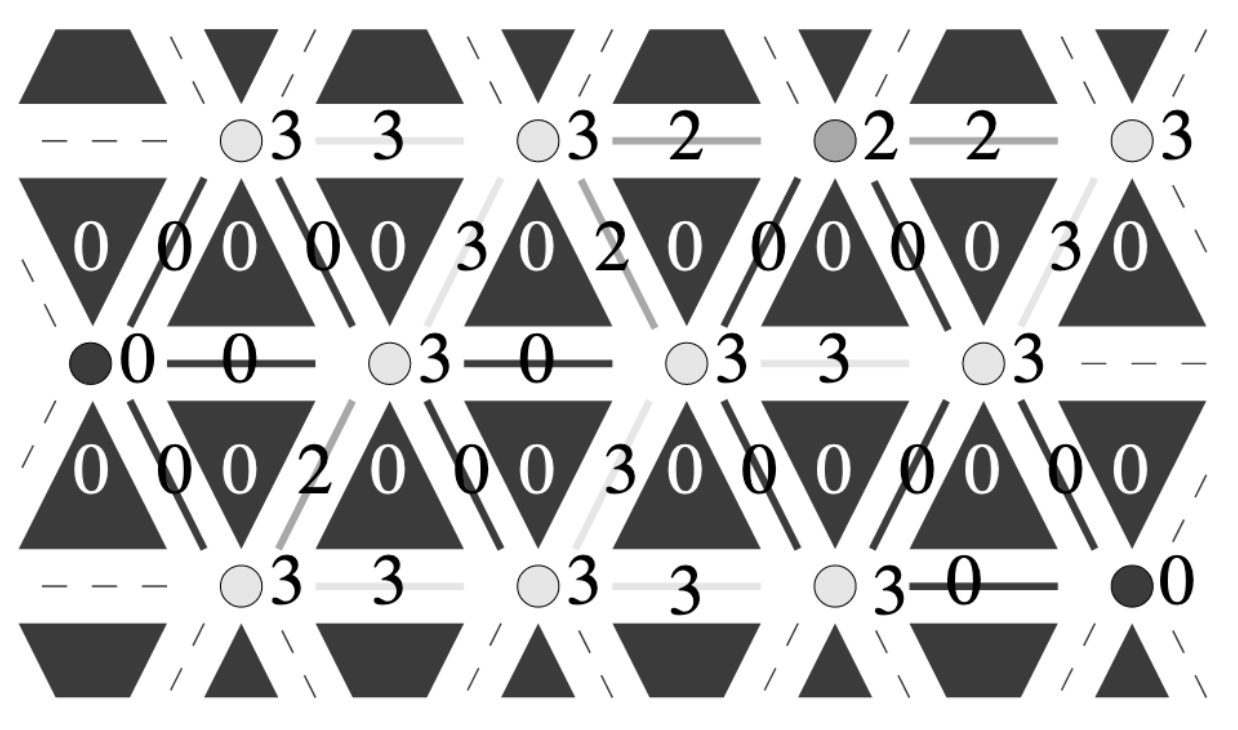} \\
    (b) \\
    \includegraphics[height=.2\textwidth]{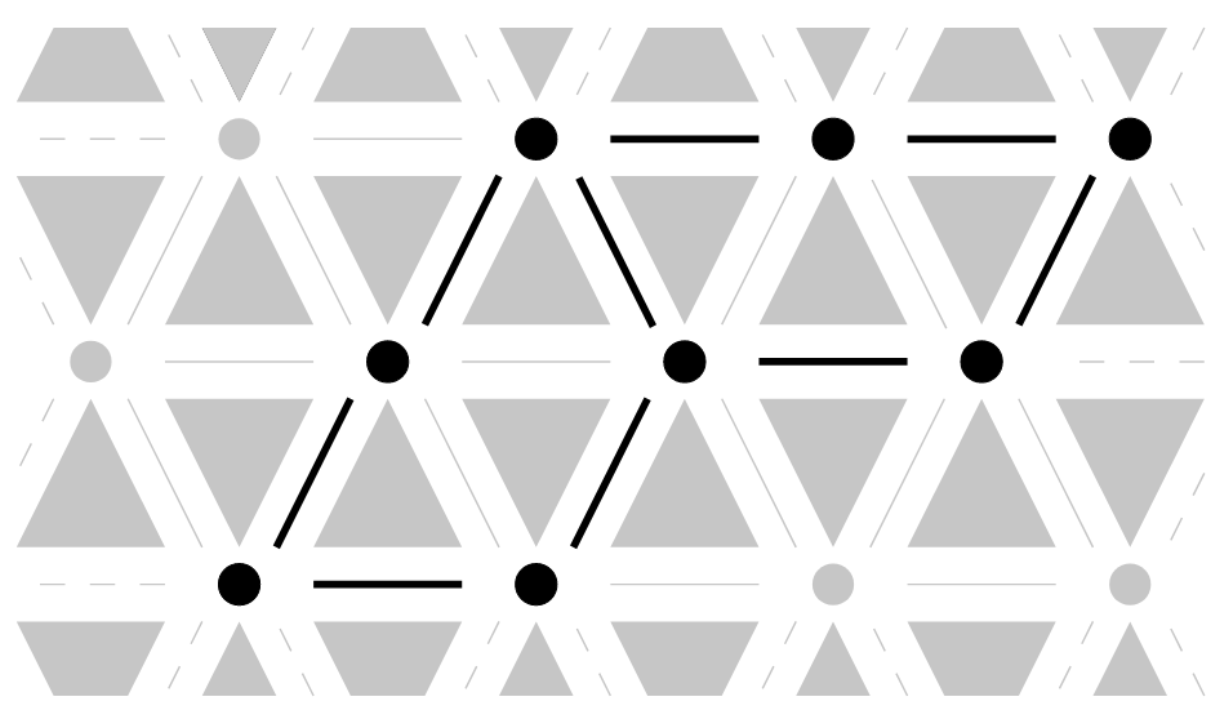} \\
   (c)\\
 \end{tabular}
  \caption{\label{fig:watershedOnSimplicialStack}
    (a) A simplicial stack $F$ on a subset of a normal 2-pseudomanifold. (b) An ultimate 2-collapse of $F$. (c) A watershed of $F$. The pair $(y,x)$ in (a) is a free-pair for $F$.
  }
\end{center}
\end{figure}

\begin{theorem} [from {\cite{cousty2009collapses}}] \label{th:water1}
Let $M \in \bb{M}$ and let $F$ be a stack on $M$.
A complex $X \preceq M$ is a watershed of $F$ if and only if $X$ is a $\mathcal{C}$-watershed of $F$.
\end{theorem}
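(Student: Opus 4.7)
The plan is to prove the two implications separately, using Propositions~\ref{pro:water6} and~\ref{pro:water7} as the main tools for transporting information along $d$-collapses.

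For the direction $(\Leftarrow)$, let $X$ be the simplicial closure of the biconnected $(d{-}1)$-faces of an ultimate $d$-collapse $H$ of $F$. Iterating Proposition~\ref{pro:water6}, $\mathfrak{min}(H)$ is an extension of $\mathfrak{min}(F)$, which sets up a bijection between their connected components. Proposition~\ref{pro:water7} forces $\dim\mathfrak{div}(H)\leq d{-}1$, so every $d$-face of $M$ lies in some minimum of $H$. A short case analysis on a $(d{-}1)$-face $x$ with $d$-neighbors $y,z$, using the stack inequality and the absence of free $d$-pairs in $H$, shows that either $H(x)=H(y)=H(z)$ (so $x,y,z$ lie in a common minimum) or $x$ is separating for $H$; in the latter case $x$ is biconnected iff $y,z$ lie in distinct minima. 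With this structural picture, $M\setminus X$ is an extension of $\mathfrak{min}(H)$: using Proposition~\ref{pro:water0}, two $d$-faces of $M$ lie in the same strong component of $M\setminus X$ iff they lie in the same minimum of $H$, since any strong path between distinct minima must cross a biconnected $(d{-}1)$-face and hence hit $X$. Minimality of $X$ as a cut is then immediate, and composing extensions gives that $X$ is a cut for $\mathfrak{min}(F)$. For the drop-of-water condition, given $w\in X$, pick a biconnected $x\supseteq w$ whose two $d$-faces $y,z$ lie in distinct minima of $H$; Proposition~\ref{pro:water4} applied inside each basin provides descending strong paths from $y$ and $z$ to $d$-faces of those minima, and prepending $y$ and $z$ to them yields the two required paths from $w$ to distinct minima of $F$.

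For the converse, let $X$ be a watershed of $F$; the goal is to exhibit an ultimate $d$-collapse $H$ of $F$ whose biconnected-face closure is exactly $X$. Starting from $F$, as long as $\dim\mathfrak{div}\geq d$, Proposition~\ref{pro:water7} produces a free $d$-pair, which we collapse; Proposition~\ref{pro:water6} guarantees that, throughout this process, the minima evolve only by extension and the divide only by collapse. The main obstacle — the crux of this direction — is to show that the collapse sequence can be chosen so that no $(d{-}1)$-face of $X$ ever enters a free $d$-pair. The drop-of-water principle is the right invariant: from every $w\in X$ there exist two descending strong paths in $M\setminus X$ reaching distinct minima of $F$, and collapsing a face of $X$ at some step would eventually destroy one of these paths and so violate the extension condition on $\mathfrak{min}(H)$ relative to $\mathfrak{min}(F)$. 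Combined with the minimality of $X$ as a cut, which prevents $X$ from carrying redundant faces that could be collapsed without merging basins, this implies that in the resulting $H$ the biconnected faces are precisely the $(d{-}1)$-faces of $X$, so $X$ is a $\mathcal{C}$-watershed of $F$.
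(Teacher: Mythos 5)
First, a point of comparison: the paper does not actually prove Theorem~\ref{th:water1}; it is imported from \cite{cousty2009collapses}, so there is no in-paper argument to measure yours against. Your outline follows the same general strategy as that reference (tracking minima and divides along $d$-collapses via Propositions~\ref{pro:water6} and~\ref{pro:water7}), and the cut-related part of your ($\Leftarrow$) direction is essentially sound: the case analysis on a $(d-1)$-face of an ultimate $d$-collapse $H$ is correct, and the argument that $M\setminus X$ is an extension of $\mathfrak{min}(H)$, hence of $\mathfrak{min}(F)$, with $X$ minimal, goes through.

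There are, however, two genuine gaps. In the ($\Leftarrow$) direction you establish the drop-of-water condition by ``applying Proposition~\ref{pro:water4} inside each basin''; this is circular, since Proposition~\ref{pro:water4} assumes that its $W$ is already a watershed of $F$ --- which is exactly what you are trying to prove about $X$. Moreover, what Definition~\ref{def:water3} requires are strong paths descending \emph{for $F$}, while the ultimate collapse $H$ only gives you, cheaply, paths descending for $H$; since $H\leq F$ pointwise with the deficit varying from face to face, descent for $H$ does not transfer to descent for $F$. The missing ingredient is an induction over the collapse sequence: each elementary collapse through a free pair $(u,v)$ creates a two-step path $\langle v,u,t\rangle$ descending for the current stack into the already-processed region, and these must be shown to concatenate into strong paths, descending for $F$ and contained in $M\setminus X$, from every $d$-face to the minimum of its component. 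In the ($\Rightarrow$) direction, the crucial claim --- that the free $d$-pairs can always be chosen so that no face of the given watershed $X$ is ever collapsed, and that the biconnected faces of the resulting ultimate collapse are exactly the $(d-1)$-faces of $X$ --- is asserted rather than proved. ``Collapsing a face of $X$ would eventually destroy one of these paths'' is not an argument: a single elementary collapse lowers two values by one and need not break any descending path, and you would additionally need a relative version of Proposition~\ref{pro:water7} guaranteeing a free pair \emph{outside} $X$ whenever $\mathfrak{div}$ of the current stack still strictly contains $X$. These are precisely the technical lemmas that occupy most of \cite{cousty2009collapses}; as written, your proposal is a correct road map but not a proof.
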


Theorem \ref{th:water1} is illustrated in Fig.~\ref{fig:watershedOnSimplicialStack}.


\begin{procedure*}[tb]
 \KwData{A stack $F$ defined on a normal pseudomanifold $M$}
 \KwResult{A watershed $W$ of $F$}
 Set $H=F$;
 
\nl \Repeat{$H$ has no free $d$-pair}{Select arbitrarily a free $d$-pair $(x,y)$ of $H$;
 
 and replace~$H$ by the elementary collapse of $H$ through $(x,y)$;}

\nl Label all $d$-faces of distinct minima of $H$ with distinct labels;

\nl Extract from $H$ the complex $W$ that is the simplicial closure of the set of all $(d-1)$-faces of~$M$ that are biconnected for $H$. 
\caption{WatershedCollapse($F$,$M$) -- computes a watershed $W$ of a stack $F$ defined on a normal pseudomanifold $M$.}
\label{alg:WatershedCollapseAlgo}
\end{procedure*}

From Theorem \ref{th:water1}, we may derive the  procedure 
\ref{alg:WatershedCollapseAlgo} for obtaining a watershed of 
a stack $F$ on $M \in \bb{M}_d$.

The result $W$ depends on the choices of the free pairs that are made at step~1. In any case,
any watershed of $F$ may be obtained by this procedure.

As explained hereafter, a direct implementation of the algorithm \ref{alg:WatershedCollapseAlgo}
can be slow.

\begin{itemize}
    \item Step 1 is the more complex one. A naive implementation of this step is in the order of $n^2*h$, where $n$ is the number of $d$-faces, and $h$ is the number of different altitudes of $F$. However, this step can be done in quasi-linear time, relying on a straightforward adaptation to simplicial complexes of the algorithm presented in \cite{10.1007/978-3-540-31965-8_17}. This algorithm relies on a tree structure, where the nodes of the tree are the connected components of all the level sets of $F$, and where the edges of the tree correspond to the parenthood relationships between those connected components.
    \item Step 2 is a simple labelling, and may be done in linear time with respect to the number of $d$-faces. By using such a labelling, checking if a $d$-face is biconnected can be done in constant time.
    \item Finally, step 3 may be implemented in linear time, with respect to the number of incidence relations of $M$, that is the cardinality of the set  $\{(x,y) \; \mid \; x, y \in M$ and $x \subsetneq y \}$. 
\end{itemize}

\section{Morse stacks} \label{sec:morse}

In this section, we transpose some basic notions of discrete Morse theory to stacks.
We proceed by defining a Morse stack, which is the counterpart of a classical discrete Morse function.
Morse stacks simply correspond to the inverse of flat discrete Morse functions. Since any discrete Morse function is equivalent to a flat discrete Morse function, there is no loss of generality to develop our notions with Morse stack.
See Appendix \ref{app:DMF}, which provides some properties linking these notions. 

Let $F$ be a map from a complex $X$ to $\bb{Z}$.
We say that a covering pair $(x,y)$ of~$X$ is a \emph{flat pair of $F$} whenever we have $F(x) = F(y)$.

\begin{definition}[Morse stack]\label{def:morse1}
Let $F$ be a simplicial stack on a complex $X$.
We say that $F$ is a \emph{Morse stack (on $X$)} if any face of $X$ is in at most one flat pair of~$F$.
\end{definition}

Let $F$ be an arbitrary simplicial stack, and let $(x,y)$ be a covering pair of~$F$. We have seen that, if $(x,y)$ is a free pair of $F$, then
necessarily $(x,y)$ is a flat pair of $F$. Suppose now that $(x,y)$ is a flat pair of $F$. Then, there may exist another covering pair $(x,z)$
that is also a flat pair of $F$. In this case, we see that $(x,y)$ is not a free pair of $F$. By the very definition of a Morse stack, this situation cannot occur.
In fact, we have the following result.

\begin{Proposition} \label{pro:morse1}
Let $F$ be a Morse stack on a complex $X$. A covering pair $(x,y)$ of~$X$ is a free pair of $F$ if and only if $(x,y)$ is a flat pair of $F$.
\end{Proposition}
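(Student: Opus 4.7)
The plan is to split along the biconditional. The forward implication is essentially already in the text: the paragraph just after Definition~\ref{def:stack1} establishes $F(x)=F(y)$ for any free pair of an arbitrary simplicial stack, so every free pair is automatically a flat pair, without invoking the Morse hypothesis.

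For the converse, I would fix a flat $p$-pair $(x,y)$ of $F$ and set $\lambda := F(x) = F(y)$. The aim is to show that $y$ is the only face of the subcomplex $F[\lambda]$ that properly contains $x$, so that $(x,y)$ becomes a free $p$-pair of $F[\lambda]$ and hence a free pair of $F$ in the sense of Definition~\ref{def:stack1}. I would argue by contradiction, assuming the existence of some $z \in F[\lambda]$ with $x \subsetneq z$ and $z \neq y$. The decisive observation is that the stack property propagates flatness along chains inside $F[\lambda]$: for every $w$ with $x \subsetneq w \subseteq z$, the inequalities $\lambda = F(x) \geq F(w) \geq F(z) \geq \lambda$ force $F(w) = \lambda$. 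Applied to a $w$ with $dim(w)=dim(x)+1$ sandwiched between $x$ and $z$, this yields a flat covering pair $(x,w)$.

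From here I would perform a two-case analysis on $w$. If $w \neq y$, then $x$ sits in two distinct flat pairs $(x,y)$ and $(x,w)$, immediately contradicting the Morse stack hypothesis. Otherwise $w = y$, so $y \subsetneq z$; picking $w'$ with $dim(w')=dim(y)+1$ and $y \subsetneq w' \subseteq z$ and applying the same flatness-propagation argument produces a flat covering pair $(y,w')$, and now $y$ lies in both $(x,y)$ and $(y,w')$, again violating the Morse condition. Either way we reach a contradiction, so no such $z$ exists.

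The main obstacle is really this dichotomy, together with the bookkeeping around Definition~\ref{def:stack1}'s altitude condition $\lambda > \lambda_m$, which I expect to handle as a small preliminary: the flat-pair hypothesis together with the Morse condition on neighbouring covering pairs should exclude the degenerate situation where $(x,y)$ sits entirely at the global minimum altitude in a way that would interfere with the argument.
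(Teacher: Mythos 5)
Your forward implication and the core of your converse are sound, and in fact more careful than the paper's own justification, which consists only of the informal paragraph preceding the statement and considers just one obstruction to freeness, namely a second flat \emph{covering} pair $(x,z)$. Your flatness-propagation along $x \subsetneq w \subseteq z$ (forcing $F(w)=\lambda$ for every intermediate face) together with the dichotomy $w \neq y$ versus $w = y$ is exactly what is needed to reduce an arbitrary coface $z \in F[\lambda]$ of $x$, of any codimension, to a violation of the Morse condition; that part of your argument is correct and closes a detail the paper glosses over.

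The genuine gap is the deferred altitude condition. You assert that the Morse hypothesis "should exclude" a flat pair sitting at the global minimum altitude, but for an arbitrary complex $X$ --- which is the hypothesis of Proposition~\ref{pro:morse1} --- this is false. Take $X = \{\{a\},\{b\},\{a,b\}\}$ with $F(\{a\}) = F(\{a,b\}) = 0$ and $F(\{b\}) = 1$. This is a simplicial stack whose only flat pair is $(\{a\},\{a,b\})$, so each face lies in at most one flat pair and $F$ is a Morse stack; yet $F(\{a\}) = 0 = \lambda_m$, so by Definition~\ref{def:stack1} this flat pair is \emph{not} a free pair of $F$. Hence no "small preliminary" can close this step at the stated level of generality: your proof (and the paper's informal argument) only establishes that $y$ is the unique face of $F[F(x)]$ properly containing $x$, not that $F(x) > \lambda_m$. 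To finish, one needs an additional hypothesis, e.g.\ that every non-facet of $X$ is covered by at least two faces --- which holds in a normal pseudomanifold, the setting where the proposition is actually used: there, a flat pair $(x,y)$ at altitude $\lambda_m$ would put every covering coface of $x$ at altitude $\lambda_m$ and thus produce a second flat pair containing $x$, contradicting the Morse condition. You should either add that hypothesis explicitly or restrict the statement to $M \in \bb{M}$.
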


\begin{definition}[Regular and critical simplex]
Let $F$ be a Morse stack on a complex $X$ and let $x \in X$ with $dim(x) = p$.
\begin{itemize}
    \item We say that $x$ is \emph{regular} or \emph{$p$-regular for $F$} if $x$ is in a flat pair of $F$.
    \item We say that $x$ is \emph{critical} or \emph{$p$-critical for $F$} if $x$ is not regular for $F$.
\end{itemize}
\end{definition}


Let $F$ be a Morse stack on a complex $X$. 
\newline
The \emph{gradient vector field of $F$}, written
$\gradient(F)$, is the set of all flat pairs of $F$. 
\newline
If $(x,y)$ is a covering pair of $F$ such that $F(x) > F(y)$, we say that
$(y,x)$ is a \emph{differential pair of $F$}. We write
$\dif{F}$ for the set of all differential pairs for $F$. 
\newline
We also set: 
\begin{itemize}
    \item $\gradient_p(F) = \{(x,y) \in \gradient(F) \; \mid \; dim(y) = p \}$, and 
    \item $\dif_p(F) = \{(y,x) \in \dif(F) \; \mid \; dim(y) = p \}$.
\end{itemize}


\begin{figure}
    \centering
    \includegraphics[width=.8\columnwidth]{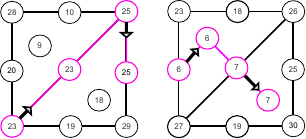}
    \caption{Left: A Morse stack F on a complex X, with, in pink, a $\nabla_1$-path. Right: A Morse stack G on a complex X', with, in pink, a $\nabla_2$-path.}
    \label{fig:lambdapath}
\end{figure}

A \emph{$\nabla_p$-path  in $F$ (from $x_0$ to $x_k$)} is a sequence $\pi = \langle x_0,x_1,...,x_{k} \rangle$ composed of faces of $X$ such that,
for all $i \in [0,k-1]$, the pair $(x_i,x_{i+1})$ is either in $\gradient_p(F)$ or in $\dif_p(F)$. See Fig.~\ref{fig:lambdapath} for an illustration.
A sequence $\pi$ is a \emph{gradient path for $F$} if $\pi$ is a $\nabla_p$-path for $F$ for some $p$.

Let $\pi = \langle x_0,x_1,...,x_{k} \rangle$ be a $\nabla_p$-path in $F$. We observe that: 
\begin{itemize}
    \item For any $i \in [1,k-1]$, the pair $(x_{i},x_{i+1})$ is in $\gradient_p(F)$ (resp. $\dif_p(F)$)
whenever $(x_{i-1},x_{i})$ is in $\dif_p(F)$ (resp. $\gradient_p(F)$). 
    \item Each face of $\pi$ is either a $p$-face or a $(p-1)$-face. For any $i \in [0,k-1]$,
if $x_i$ is a $p$-face, then $x_{i+1}$ is a $(p-1)$-face, and if $x_i$ is a $(p-1)$-face, then $x_{i+1}$ is a $p$-face. 
    \item If $\pi$ is not trivial, then $\pi$ cannot be closed, that is, we have necessarily $k = 0$ whenever $x_k = x_0$.
    \item The path $\pi$ is an ascending path for $F$, that is, we have $F(x_i) \leq F(x_{i+1})$ for any $i \in [0,k-1]$. Furthermore, we have
$F(x_i) < F(x_{i+2})$ for any $i \in [0,k-2]$.
\end{itemize}

As a consequence of the above remarks, we can easily derive the following result, which provides an 
alternative definition of a gradient path.   

\begin{Proposition} \label{prop:gradpath}
Let $F$ be a Morse stack on $X \in \bb{S}$.
A path $\pi$ in $X$ is a ${\nabla}_p$-path in $F$ if and only if $\pi$ is a strong $p$-path in $X$ that is an ascending path for $F$.
\end{Proposition}

\begin{remark} 
The above definition of a gradient path is a simple extension of the classical definition, and also of the definition 
given in \cite[Definiton 2.46]{scoville2019discrete}. Here we allow beginning and ending the sequence
with either a $p$-face or a $(p-1)$-face. 
\end{remark} 

We have seen the importance of the minima of a simplicial stack for the definition of a watershed. The following proposition allows us to give a precision about this notion in the case of Morse stacks.

\begin{Proposition} \label{pro:mw1}
Let $F$ be a Morse stack on a complex $X \in \bb{S}$.
If $S \subseteq X$ is a minimum of $F$, then $S$ contains a single facet $x$ of~$X$.
Furthermore, we have either $S = \{x\}$ or $S = \{ x ,y \}$, where $(y,x)$ is a free pair for $X$.
\end{Proposition}
\begin{proof}
    Let  $S$ be a minimum for $F$, and let $\lambda$ be the altitude of $S$. The set $S$ is open for $X$, thus $S$ contains at least one facet of $X$. 
    Suppose $S$ contains more than one facet. By Remark \ref{rem:path}, since $S$ is connected, there exist two distinct facets $x$, $y$ of $X$ such that $x \in S$, $y \in S$, and $z = x \cap y \in S$. We have $F(x) = F(y) = F(z) =\lambda$. Since $F$ is a stack, we have $F(t) = \lambda$ for all $t$ such that $z \subseteq t \subseteq x$ and all $t$ such that $z \subseteq t \subseteq y$. Also, there exist two distinct faces $x' \subseteq x$, $y' \subseteq y$, such that $(z,x')$ and $(z,y')$ are covering pairs for $X$. But these pairs are flat pairs for $F$. In this case, the face $z$ would belong to more than one flat pair, a contradiction.
    
    Thus $S$ contains a single facet $x$. By the very definition of a Morse stack, it may easily be seen that we have either
    $S = \{x\}$ or $S = \{ x ,y \}$, where $(y,x)$ is a free pair for $X$.
\end{proof}

A pseudo-manifold has no free pairs, thus we have:

\begin{Proposition}
Let $X \in \bb{M}$ and let $F$ be a Morse stack on $X$.
If $S$ is a minimum for $F$, then $S$ is necessarily composed of a single face that is a facet of~$X$.
\end{Proposition}

In the results given in the sequel, we will only consider complexes that are normal pseudo-manifolds. We will say that a face $x \in X$ is \emph{a minimum (of $F$)} whenever the set $\{x\}$ is a minimum of $F$.

\section{Morse stacks and watersheds}
\label{sec:mosewatersheds}

Let $F$ be a Morse stack on a complex $X \in \bb{S}$.
Let $x$, $y$ be two faces of $X$. We say that  \emph{$x$ is $\nabla_p$-linked to $y$}
if there is a $\nabla_p$-path in $F$ from $x$ to $y$.
Let $\pi = \langle x=x_0,...,x_{k}=y \rangle$ be a $\nabla_p$-path in $F$ from $x$ to $y$. We write
$\tilde{\pi} = \langle y=x_k,...,x_{0}=x \rangle$ and we say that $\tilde{\pi}$ is a $\tilde{\nabla}_p$-path in $F$ from $y$ to~$x$.
We say that a face $z \in X$ is \emph{an extension of $\pi$} if
$\langle x=x_0,...,x_{k}=y, z \rangle$ is a $\nabla_p$-path in $F$ from $x$ to $z$.
We say that $z$ is \emph{an extension of $\tilde{\pi}$}
if $\langle y=x_k,...,x_{0}=x, z \rangle$ is a $\tilde{\nabla}_p$-path in $F$ from $y$ to~$z$.

\begin{Proposition} \label{pro:mw2}
Let $F$ be a Morse stack on $M \in \bb{M}_d$, and let $x$ be a facet of $X$.
Let $\tilde{\pi}$ be a $\tilde{\nabla}_d$-path in $F$ from the facet $x$ to a face $y \in X$.
Then one and only one of the following statements is true: 
\begin{enumerate}
    \item The face $y$ is a minimum.
    \item There exists a unique face that is an extension of $\tilde{\pi}$.
\end{enumerate}
\end{Proposition}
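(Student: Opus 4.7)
The plan is to exploit the forced dimensional alternation of a $\tilde{\Lambda}_d$-path starting at a $d$-face and then dispatch on the parity of $k$. Since every pair in $\gradient_d(F)$ or $\dif_d(F)$ couples a $d$-face with a $(d-1)$-face, the consecutive faces of $\tilde{\pi} = \langle x = w_0, w_1, \dots, w_k = y\rangle$ must alternate between dimensions $d$ and $d-1$, starting from the $d$-face $x$. So $y$ is a $d$-face if $k$ is even and a $(d-1)$-face if $k$ is odd. A face $z$ is an extension of $\tilde{\pi}$ precisely when the pair $(z,y)$ lies in $\gradient_d(F) \cup \dif_d(F)$.

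Suppose first that $y$ is a $(d-1)$-face. By Proposition~\ref{pro:mw1}, any minimum of $F$ is a single facet, hence $d$-dimensional, so $y$ is not a minimum and~(1) fails. Dimensional grounds rule out $(z,y) \in \gradient_d(F)$, so any extension $z$ satisfies $(z,y) \in \dif_d(F)$: $z$ must be a $d$-face with $y \subseteq z$ and $F(z) < F(y)$. Because $M$ is non-branching, exactly two $d$-faces contain $y$, namely the predecessor $w_{k-1}$ of $y$ in $\tilde{\pi}$ and another face $w'$. The last step of $\tilde{\pi}$ (read in reverse) gives $(y, w_{k-1}) \in \gradient_d(F)$, i.e. a flat pair with $F(w_{k-1}) = F(y)$, which excludes $z = w_{k-1}$. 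Since $F$ is a Morse stack, $y$ cannot form a second flat pair, so $F(w') < F(y)$ by the stack inequality, and $z = w'$ is the unique extension.

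Suppose now that $y$ is a $d$-face. Dimension forces $(z,y) \in \gradient_d(F)$, so any extension is a $(d-1)$-face of $y$ paired flatly with $y$. Morse uniqueness of flat pairs gives at most one such extension, and its existence is equivalent to $y$ being regular. A short lemma I would insert asserts that a $d$-face $y$ is critical if and only if it is a minimum of $F$: if $y$ is critical, then no $(d-1)$-face of $y$ shares the altitude $F(y)$, so by the stack property every $(d-1)$-face of $y$ has altitude strictly greater than $F(y)$, which makes $\{y\}$ a connected component of $M \setminus F[F(y)+1]$ and hence a minimum; the converse is immediate. It follows that either $y$ is critical, in which case~(1) holds and~(2) fails, or $y$ is regular, in which case~(2) holds and~(1) fails. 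The main subtlety is the use of the Morse condition in the first case to rule out the alternative $F(w') = F(y)$; the remainder is a bookkeeping check on the dimension patterns of $\gradient_d(F)$ and $\dif_d(F)$.
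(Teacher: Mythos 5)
Your proof is correct and follows essentially the same route as the paper's: split on whether $y$ has dimension $d-1$ (where non-branching plus the uniqueness of flat pairs through $y$ yields the unique differential extension) or dimension $d$ (where an extension exists iff $y$ is regular, i.e.\ not a minimum). The only difference is cosmetic: you spell out as a lemma the equivalence ``critical $d$-face $\Leftrightarrow$ minimum,'' which the paper merely observes, and your justification of it via $(d-1)$-faces extends to lower-dimensional faces of $y$ by the stack inequality, so no gap remains.
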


\begin{proof}
We set $\tilde{\pi} = \langle x=x_0,...,x_{k}=y \rangle$. 
\newline
i) Suppose $y$ is a $(d-1)$-face. In this case, $y$ cannot be a minimum. Furthermore, 
we have $k \geq 1$ (since $x$ is a $d$-face),
and the face $t = x_{k-1}$ is necessarily a $d$-face.
By the definition of a $\tilde{\nabla}_p$-path, the pair $(y,t)$ is a flat pair.
Since $X$ is a pseudomanifold, there exists a unique $d$-face $z \in X$
such that $t \cap z = y$. We must have $F(z) < F(y)$, otherwise $y$ would belong to more than one flat pair.
Therefore, the pair $(z,y)$ is a differential pair and $z$ is an extension of $\tilde{\pi}$.
Since $(y,t)$ and $(y,z)$ are the only covering pairs that contain $y$, the face $z$ is the unique extension of $\tilde{\pi}$. 
\newline
ii) Suppose $y$ is a $d$-face.
By the definition of a $\tilde{\nabla}_p$-path, a face $z$ is an extension of $\tilde{\pi}$ if and only if $(z,y)$ is a flat pair.
Now we observe that $y$ is not a minimum if and only if there is a face $z$ such that $(z,y)$ is a flat pair.
But $y$ belongs to at most one flat pair. Thus, $\tilde{\pi}$ has a unique extension whenever $y$ is not a minimum. 
\end{proof}
Let $F$ be a Morse stack on $M \in \bb{M}_d$.
It should be noted that, if $\pi$ is  a $\nabla_d$-path in $F$ from a facet $x$ to a face $y$,
then $\pi$ may have more than one extension.
Nevertheless, by induction, we obtain the following result from Prop.~\ref{pro:mw2}.

\begin{Proposition} \label{pro:mw3}
Let $F$ be a Morse stack on $M \in \bb{M}_d$, and let $x$ be a facet of $X$.
There exists a unique minimum $m$ of $F$ such that $m$ is $\nabla_d$-linked to $x$.
Furthermore, there exists a unique $\nabla_d$-path in $F$ from $m$ to $x$.
\end{Proposition}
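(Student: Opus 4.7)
The plan is to leverage Proposition~\ref{pro:mw2}, which at each step gives an exclusive dichotomy: either we are already at a minimum, or the current $\tilde{\Lambda}_d$-path has a \emph{unique} extension. This determinism is the whole engine of the proof; I would build a canonical reverse path from $x$ and argue that every candidate $\Lambda_d$-path from a minimum to $x$ must reverse to it.

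For existence, I would construct inductively a sequence of $\tilde{\Lambda}_d$-paths $\tilde{\pi}_0 \subset \tilde{\pi}_1 \subset \cdots$ starting from the singleton $\tilde{\pi}_0 = \langle x \rangle$. At each step, if the current endpoint is a minimum of $F$, I stop; otherwise, Proposition~\ref{pro:mw2} guarantees a unique face that extends $\tilde{\pi}_i$, which I append to produce $\tilde{\pi}_{i+1}$. The key observation for termination is the strict decrease of $F$ every two steps along a $\tilde{\Lambda}_d$-path, which is just the reversal of the strict-increase property noted in the bulleted list preceding Proposition~\ref{pro:mw2}; since $F$ takes finitely many values on the finite complex $M$, the construction terminates in finitely many steps. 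The only way for termination to occur is the first alternative of Proposition~\ref{pro:mw2}, so the final endpoint $m$ is a minimum. Reversing gives a $\Lambda_d$-path from $m$ to $x$.

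For uniqueness, suppose $\pi'$ is any $\Lambda_d$-path in $F$ from a minimum $m'$ to $x$. Its reverse $\tilde{\pi}'$ is a $\tilde{\Lambda}_d$-path from $x$ to $m'$. I would compare $\tilde{\pi}'$ with the canonical path $\tilde{\pi}_N$ obtained above, face by face. Both start with $x$; and whenever their common initial segment ends at a non-minimum face, Proposition~\ref{pro:mw2} forces the next face in each to be the same (unique) extension. Conversely, the exclusive ``one and only one'' in Proposition~\ref{pro:mw2} guarantees that a minimum admits no extension, so neither path can continue past its first minimum endpoint. Hence the two paths coincide throughout, which gives both $m' = m$ and $\pi' = \widetilde{\tilde{\pi}_N}$.

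The main obstacle is purely bookkeeping: one must verify that Proposition~\ref{pro:mw2} is applicable at every intermediate step, which amounts to noting that each $\tilde{\pi}_i$ is by construction a $\tilde{\Lambda}_d$-path starting from the facet $x$. Once this is observed, the inductive argument is essentially mechanical, and no deeper difficulty arises.
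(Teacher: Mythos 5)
Your proof is correct and takes essentially the same route as the paper, which states only that Proposition~\ref{pro:mw3} follows ``by induction'' from Proposition~\ref{pro:mw2}; your canonical reverse-path construction, with termination via the strict decrease every two steps and uniqueness via the exclusive dichotomy of Proposition~\ref{pro:mw2}, is exactly that induction made explicit.
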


We now consider the case where a ${\nabla}_d$-path has no extension. Recall that a $(d-1)$-face $x$ is separating for $F$ if the two
$d$-faces $y$, $z$ which contain $x$, are such that $F(y) < F(x)$ and $F(z) < F(x)$.

\begin{Proposition} \label{pro:mw4}
Let $F$ be a Morse stack on $M \in \bb{M}_d$.
Let $x$ be a facet of $X$, and let $\pi$ be a ${\nabla}_d$-path in $F$ from $x$ to a face $y \in M$.
If $\pi$ has no extension, then the face $y$ is necessarily separating for $F$.
\end{Proposition}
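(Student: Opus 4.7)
The plan is to perform a case analysis on the dimension of $y$, which by the ascending-strong-path structure of a $\Lambda_d$-path starting at the $d$-facet $x_0 = x$ must be either $d$ or $d-1$. I will rule out the first case using the Morse property, and in the second case I will use the non-branching property of $M$ together with the stack inequality to identify $y$ as separating.

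First I will handle the case $\dim(y) = d$. Here an extension $z$ of $\pi$ must satisfy $(y,z) \in \dif_d(F)$, i.e.\ $z$ must be a $(d-1)$-face of $y$ with $F(z) > F(y)$. The simplex $y$ has $d+1 \geq 2$ faces of dimension $d-1$, and each such face $z$ satisfies $F(z) \geq F(y)$ by the stack property. If $k \geq 1$, then $(x_{k-1}, y)$ is itself a flat pair, so by the Morse property no other $(d-1)$-face $z \subseteq y$ can satisfy $F(z) = F(y)$; hence all of the remaining $d$ faces yield strict inequality and provide extensions. If $k = 0$, the Morse property still allows at most one $(d-1)$-face $z \subseteq y$ with $F(z) = F(y)$, so among the $d+1$ such faces at least one gives $F(z) > F(y)$, again an extension. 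In either subcase the no-extension hypothesis fails, so this case cannot arise.

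Next I will treat the case $\dim(y) = d-1$, which forces $k \geq 1$, with $(x_{k-1}, y) \in \dif_d(F)$, giving $y \subsetneq x_{k-1}$ and $F(x_{k-1}) < F(y)$. Since $y$ is a $(d-1)$-face, an extension $z$ must come via $(y,z) \in \gradient_d(F)$, i.e.\ a $d$-face $z \supsetneq y$ with $F(z) = F(y)$. By the non-branching condition in Definition~\ref{def:normal}, $y$ is contained in exactly two $d$-faces of $M$: $x_{k-1}$ and some unique other face $w$. Because $F(x_{k-1}) < F(y)$, the face $x_{k-1}$ cannot form a flat pair with $y$, so the absence of any extension is equivalent to $F(w) \neq F(y)$. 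The stack inequality $F(w) \leq F(y)$ then forces $F(w) < F(y)$. Therefore both $d$-faces containing $y$ have $F$-value strictly less than $F(y)$, which is exactly the definition of $y$ being separating for $F$.

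The main obstacle I expect is simply the bookkeeping of the two types of pairs and the edge case $k=0$; the real content is the combination of the Morse property (``at most one flat pair per face'') with the non-branching property of a pseudomanifold (``exactly two $d$-faces per $(d-1)$-face''), which cooperate to force the dichotomy between extensibility and separation.
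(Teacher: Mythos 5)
Your proof is correct and follows the same route as the paper: a case split on $\dim(y)$, ruling out $\dim(y)=d$ and using the non-branching property plus the stack inequality when $\dim(y)=d-1$. The only difference is that you spell out the $\dim(y)=d$ case (counting the $d+1\geq 2$ codimension-one faces of $y$ and invoking the ``at most one flat pair'' property), a step the paper simply asserts; your justification of it is valid.
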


\begin{proof}
Let  $\pi = \langle x=x_0,...,x_{k}=y \rangle$ be a ${\nabla}_d$-path in $F$ from $x$ to $y \in M$. 
\newline
If $dim(y) = d$, then $\pi$ has necessarily an extension.
Now suppose  $dim(y) = d-1$, thus $k \geq 1$. The face $y$ is a face of two $d$-faces, the face $z = x_{k-1}$ and another face $t$.
Since $\pi$ has no extension, we have $F(t) < F(y)$.
Furthermore, by the very definition of a ${\nabla}_d$-path, the pair $(z,y)$ is a differential pair,
thus we have $F(z) < F(y)$. Therefore, $y$ is separating for~$F$. 
\end{proof}

Let $F$ be a Morse stack on a complex $X \in \bb{S}$.
Let $\pi$ be a $\nabla_p$-path in $F$. We say that $\pi$ is \emph{maximal} if neither $\pi$ nor $\tilde{\pi}$ has an extension.
The following result is a direct consequence of Prop. \ref{pro:mw2} and \ref{pro:mw4}.

\begin{corollary} \label{cor:mw5}
Let $F$ be a Morse stack on $M \in \bb{M}_d$.
Let $\pi$ be a ${\nabla}_d$-path in $F$ from $x$ to $y$.
If $\pi$ is maximal, then $x$ is a minimum of $F$ and $y$ is a separating face for $F$.
\end{corollary}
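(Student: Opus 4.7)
My plan is to derive the corollary from Proposition~\ref{pro:mw4} together with an argument analogous to the one used in Proposition~\ref{pro:mw2}. I would proceed in three short steps: first show that $x$ must be a facet of $M$, then invoke Proposition~\ref{pro:mw4} to conclude that $y$ is separating, and finally use a symmetric argument (prepending instead of appending) to conclude that $x$ is a minimum. The main point to watch is the dimension and alternation bookkeeping when prepending to $\pi$; once that is in place, everything reduces to the Morse property (each face lies in at most one flat pair) and the fact that $F$ is a stack.

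Write $\pi = \langle x = x_0, \ldots, x_k = y\rangle$ and recall that dimensions alternate between $d-1$ and $d$ along a $\Lambda_d$-path. Suppose for contradiction that $\dim(x) = d-1$. If $k \geq 1$, then $(x_0, x_1) \in \gradient_d(F)$, so $(x, x_1)$ is a flat pair; by the non-branching property, there is a unique other $d$-face $z$ containing $x$, and since $F$ is a stack and $x$ lies in at most one flat pair, we must have $F(z) < F(x)$, i.e., $(z, x) \in \dif_d(F)$. Prepending $z$ to $\pi$ then extends $\tilde{\pi}$, contradicting maximality. The $k = 0$ case is excluded by an almost identical analysis of the two $d$-faces containing $x$. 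Therefore $x$ is a $d$-face and hence a facet of $M$.

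Now Proposition~\ref{pro:mw4} applies directly: $x$ is a facet and $\pi$ has no extension (by maximality), so $y$ is separating for $F$.

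For $x$ being a minimum, I would argue by contradiction. If $\{x\}$ were not a minimum, then, because $x$ is a facet of a $d$-pseudomanifold, some $(d-1)$-face $w \subset x$ would satisfy $F(w) \leq F(x)$; the stack property forces $F(w) = F(x)$, so $(w, x)$ is a flat pair and $(w, x_0) \in \gradient_d(F)$. Since $x_0$ is $d$-dimensional, the first step $(x_0, x_1)$ of $\pi$ (when $k \geq 1$) already lies in $\dif_d(F)$, so prepending $w$ preserves the alternation of gradient and differential steps and produces a $\Lambda_d$-path that extends $\tilde{\pi}$, once again contradicting maximality. Hence $x$ is $d$-critical, and the characterization of minima of a Morse stack on $\bb{M}_d$ (underlying Proposition~\ref{pro:mw1}) yields that $\{x\}$ is a minimum of $F$.
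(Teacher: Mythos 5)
Your proof is correct and follows the route the paper intends: the paper offers no written argument beyond declaring the corollary a direct consequence of Propositions~\ref{pro:mw2} and~\ref{pro:mw4}, and you invoke Proposition~\ref{pro:mw4} for the separating part while your last paragraph reproduces the relevant case of Proposition~\ref{pro:mw2} (a facet fails to be a minimum exactly when it lies in a flat pair, and prepending that flat pair extends $\tilde{\pi}$). Your first paragraph, which rules out $\dim(x)=d-1$, supplies the hypothesis that $x$ is a facet --- a hypothesis both cited propositions require and which the paper leaves implicit --- so it is a worthwhile addition rather than a detour.
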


Let $F$ be a Morse stack on $M \in \bb{M}_d$. Let $x$ be a $(d-1)$-face of $X$,
and let $y$, $z$ be the two distinct $d$-faces containing $x$.
According to Prop.~\ref{pro:mw3}, each of these faces is $\nabla_d$-linked to a single minimum.
We say that the face $x$ is \emph{$\nabla$-biconnected~(for $F$)} if these two minima are distinct.
Observe that a face is necessarily separating whenever it is $\nabla$-biconnected.

\begin{definition}[Morse watershed]
Let $F$ be a Morse stack on $M \in \bb{M}$.
The \emph{Morse watershed of $F$} is the complex that is the simplicial closure of the set
composed of all faces that are $\nabla$-biconnected for $F$.
\end{definition}

\begin{figure}
\begin{center}

    \includegraphics[width=1\columnwidth]{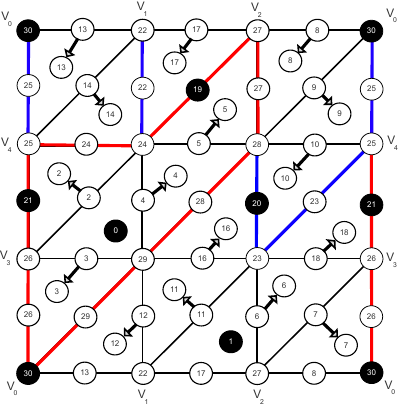} 
  \caption{\label{fig:MorseWatershed}
     Representation of a Morse stack defined on a normal pseudomanifold, a 2D-torus. The separating faces are colored in red and in blue. The red faces are furthermore biconnected, and thus form the Morse watershed.  The critical simplexes are the nodes colored in black; those of dimension 2 are minima, those of dimension 1 are saddle, and those of dimension 0 are maxima. Arrows represent gradient/collapse of dimension 2. We observe that the critical simplex 20 does not belong to the watershed.
  }
\end{center}
\end{figure}

The following theorem~\ref{th:mw1}, illustrated in Fig.~\ref{fig:MorseWatershed}, connects the watershed and the Morse watershed. 

\begin{theorem} \label{th:mw1} Let $F$ be a Morse stack on $M \in \bb{M}$.
The Morse watershed of $F$ is a watershed of $F$.
Furthermore, the Morse watershed of $F$ is the unique watershed of~$F$.
\end{theorem}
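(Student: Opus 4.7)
My strategy is to prove both statements of Theorem \ref{th:mw1} simultaneously by establishing, for any ultimate $d$-collapse $H$ of $F$, that the biconnected $(d-1)$-faces for $H$ coincide with the $\Lambda$-biconnected $(d-1)$-faces for $F$. This identifies the Morse watershed of $F$ with the $\mathcal{C}$-watershed associated to $H$, so Theorem \ref{th:water1} gives existence. Since the same equality holds for every ultimate $d$-collapse, and every watershed is a $\mathcal{C}$-watershed (again Theorem \ref{th:water1}), uniqueness follows automatically.

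First I would observe that, for a Morse stack $F$, the minima of $F$ are exactly its critical $d$-faces: a $d$-face $m$ is a minimum iff all its $(d-1)$-subfaces have strictly greater altitude, which by Proposition \ref{pro:morse1} is equivalent to $m$ being in no flat pair. An ultimate $d$-collapse $H$ exists: by Proposition \ref{pro:morse1}, every free pair of $F$ is in $\gradient_d(F)$, and more generally the only free pairs that can appear in any $G_i$ of the collapse sequence are already elements of $\gradient_d(F)$ (a pair $(x,y)$ that is flat in $G_i$ but not in $F$ would have the other $d$-face of $x$, which shares all collapses with $x$, still at altitude $G_i(x)$, blocking freeness). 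By Proposition \ref{pro:water7}(2) every $d$-face of $M$ lies in some minimum of $H$, and by iterating Proposition \ref{pro:water6} each minimum of $H$ contains exactly one critical $d$-face. Hence I obtain a well-defined map $\nu_H$ sending each $d$-face $y$ to the unique critical $d$-face whose minimum in $H$ contains $y$.

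The crux is the identity $\nu_H(y) = \mu(y)$ for every $d$-face $y$, where $\mu$ is the map supplied by Proposition \ref{pro:mw3}. Granted this identity, the equivalence of biconnectedness is immediate: a $(d-1)$-face $x$ with bounding $d$-faces $y, z$ is biconnected for $H$ iff $\nu_H(y) \neq \nu_H(z)$ iff $\mu(y) \neq \mu(z)$ iff $x$ is $\Lambda$-biconnected for $F$. I would prove the identity by induction along the sequence $F = G_0, \ldots, G_N = H$, maintaining the invariant that every $d$-face lying in a minimum of $G_i$ shares that minimum with its critical $d$-face $\mu(y)$. For the inductive step, at each elementary collapse through a free pair $(x, y) \in \gradient_d(F)$, freeness in $G_i$ forces the other $d$-face $z$ of $x$ to satisfy $G_i(z) < G_i(x)$; passing back to $F$, this together with $(x, y) \in \gradient_d(F)$ implies that the unique $\Lambda_d$-path from $\mu(y)$ to $y$ terminates as $\cdots \to z \to x \to y$ (since $y$'s flat partner is uniquely $x$ by the Morse property, forcing the preceding face to be $z$ via a differential pair). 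This yields $\mu(y) = \mu(z)$, and correspondingly the minimum that absorbs $y$ in $G_{i+1}$ is the one already labeled $\mu(z) = \mu(y)$, preserving the invariant.

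The main obstacle is this inductive bookkeeping. Several subtleties must be handled carefully: the same flat pair of $F$ may be involved in many successive elementary collapses (each lowering altitudes by one); different collapse orderings produce different intermediate stacks; and one has to establish the non-obvious fact that every free pair arising in any $G_i$ lies in $\gradient_d(F)$. All of these reduce to local altitude comparisons between a $(d-1)$-face and its two bounding $d$-faces, which the Morse condition (via Proposition \ref{pro:morse1}) rigidly controls, but the bookkeeping deserves care to make the induction airtight.
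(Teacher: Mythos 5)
Your route is genuinely different from the paper's. The paper never goes through collapses to prove this theorem: it verifies the axioms of Definition \ref{def:water3} directly for the Morse watershed (existence of a unique minimum per component of the complement via Proposition \ref{pro:mw3} and the non-$\Lambda$-biconnectedness of interior $(d-1)$-faces; the drop-of-water principle via $\tilde{\Lambda}_d$-paths, which are descending strong paths; uniqueness via Proposition \ref{pro:water4} and the observation that descending strong paths are $\tilde{\Lambda}_d$-paths), and only \emph{afterwards} invokes Theorem \ref{th:water1} to conclude that \textbf{WatershedCollapse} becomes greedy on Morse stacks. You instead make the collapse dynamics the engine of the whole proof; if completed, this would be a legitimate alternative that yields the order-independence of the collapse as a by-product rather than a corollary, but it front-loads the hardest combinatorial fact.

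And that is where the gap lies. Your pivotal lemma --- every free pair arising in an intermediate stack $G_i$ already belongs to $\gradient_d(F)$ --- is true, but the justification you give for it fails. You argue that the other $d$-face $z$ of $x$ ``shares all collapses with $x$'' and therefore remains at altitude $G_i(x)$, blocking freeness. This is false as stated: $z$ has $d$ further $(d-1)$-faces besides $x$, and a collapse through any pair $(w,z)$ with $w\neq x$ lowers $z$ without lowering $x$; so $G_i(z)<G_i(x)$ is not excluded by your argument, and that is precisely the configuration under which a pair $(x,y)$ that is not flat in $F$ could become free. The correct proof needs the Morse condition \emph{at $z$}: taking a minimal counterexample, one shows $n_{xy}=0$, hence $(x,z)$ must have been collapsed and is therefore $F$-flat, hence any collapse through $(w,z)$ with $w\neq x$ would force a second $F$-flat pair containing $z$ (or an earlier non-$F$-flat free pair), a contradiction. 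A second, smaller hole sits in the final bookkeeping: when $y$ is absorbed into a minimum of $G_{i+1}$ you use that this minimum ``is the one already labeled $\mu(z)$,'' but you have only shown $\mu(y)=\mu(z)$, not that $z$ already lies in a minimum of $G_i$ contained in the absorbing minimum; this requires the extra observation that the connected component of $z$ in the sublevel set at altitude $G_{i+1}(y)$ was already a minimum of $G_i$. Both repairs are in the spirit of what you wrote, but as it stands the induction does not close.
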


\begin{proof}
Let $W$ be the Morse watershed of $F$. 
\begin{enumerate}
    \item
Let $A$ be a connected component of $M \setminus W$. By Prop. \ref{pro:water0}, 
the set $A$ is a strong connected component of $M \setminus W$. 
Let $f$ be a facet of $A$. By Prop. \ref{pro:mw3}, there exists a $\nabla_d$-path $\pi$ from a minimum $m$ of $F$ to $f$.
By the very definition of a $\nabla_d$-path, it may be seen that $\pi$ does not contain any separating face. 
Thus $\pi$ is included in $M \setminus W$, and $m$ is in $A$. Therefore, 
$A$ contains a minimum $m$. \\
Now let $x$ be an arbitrary facet of $A$. Since $A$ is strongly connected, there exists a strong path 
$\pi= \langle m=x_0,...,x_{k}=x \rangle$ in $A$ from $m$ to $x$. By Prop. \ref{pro:mw3}, for each facet of $\pi$,
there is a unique minimum of $F$ that is $\nabla_d$-linked to this facet. \\
Let $x_i$ be a facet of $\pi$, with $i \leq k-2$. 
Thus, $x_{i+1}$ is a $(d-1)$-face and $x_{i+2}$ is a facet. 
Let $m_i$ (resp. $m_{i+2}$) be the unique minimum of $F$ that is $\nabla_d$-linked to $x_i$ (resp. to $x_{i+2}$). 
Since $x_{i+1}$ is not $\nabla$-biconnected for $F$, it may be checked that we have necessarily $m_i = m_{i+2}$. 
By induction, it follows that $m$ is $\nabla_d$-linked to the facet $x$. Since this result holds for any facet of $A$, this
clearly implies that $m$ is the unique minimum of $F$ which is in $A$. 
Thus, any connected component of $M \setminus W$ contains exactly one minimum of $F$. 
But any minimum of $F$ is included in $M \setminus W$ (since a minimum is a $d$-face).
It follows that $M \setminus W$ is an extension of $\mathfrak{min} (F)$.
Furthermore, by the definition of a $\nabla$-biconnected face,
$W$ is minimal for this last property. Therefore, $W$ is a cut for $\mathfrak{min}(F)$.
Since any $\tilde{\nabla}_d$-path is a descending strong path, it may be checked that 
$W$ fulfills all the conditions of Definition \ref{def:water3}: $W$ is a watershed. 
 
    \item Let $W'$ be a watershed of $F$. 
Let $x$ be a $(d-1)$-face of $M$ that is $\nabla$-biconnected for $F$,
and let $y$, $z$ be the two distinct $d$-faces containing $x$. By Prop. \ref{pro:water4}, there exist a descending strong path in $M \setminus W'$ from $y$ to a minimum $m$, and a descending strong path in $M \setminus W'$ from $z$ to a minimum $m'$.
By Prop.~\ref{prop:gradpath}, any descending strong path in $M$ is also a $\tilde{\nabla}_d$-path in~$M$. 
Thus, by the very definition of a $\nabla$-biconnected face, we must have $m \not= m'$. 
Therefore, the face $x$ must be in $W'$, otherwise $y$, $z$, $m$, and $m'$, would belong to the same connected component of  $M \setminus W'$. Thus, $W \subseteq W'$. Since $M \setminus W' \subseteq M \setminus W$, each connected component of $M \setminus W'$ is included in one connected component of $M \setminus W$. But $M \setminus W'$ must be maximal for this last property, otherwise $W'$ would not be a cut for $\mathfrak{min}(F)$. 
It follows that we have $W' = W$.
\end{enumerate}
\end{proof}

By Theorem \ref{th:water1}, the Morse watershed may be obtained by the algorithm
\ref{alg:WatershedCollapseAlgo}.
In this case, we have a greedy procedure, since
the result~$W$ does not depend on the choice of the free pair of $H$ that is made at each iteration.

In fact, since $F$ is a Morse stack, we can simplify this procedure. The  
algorithm 
\ref{alg:MorseWatershedAlgo}
extracts the Morse watershed $W$ of a Morse stack 
$F$ on $M \in \bb{M}_d$. Also, it provides the catchment basin $B$  of each minimum of $F$.


\begin{procedure*}
\KwData{A Morse stack $F$ defined on a normal pseudomanifold $M$}
\KwResult{The Morse watershed $W$ of $F$}
\nl
 Label all faces $x \in M$ with the label $W(x) := False$;

 Label all $d$-faces $x \in \mathfrak{min}(F)$ of distinct minima of $F$ with distinct labels $B(x) \neq 0$;
 
 Label all $d$-faces $x \in  M\setminus\mathfrak{min}(F)$ with the label $B(x) := 0$;

\nl Insert all  $d$-faces $x$ such that $B(x) \neq 0$ in a list $L$;

\nl\Repeat{$L$ is empty}{
Extract  a face $x$ from $L$;

\ForAll{$y$ such that $z = x \cap y$ is a $(d-1)$-face}{
If $F(y) = F(z)$ insert $y$ in $L$ and do $B(y) := B(x)$;

If $B(y) \neq 0$ and $B(y) \not= B(x)$ do $W(z) := True$;
}
}

\nl\lFor{all $(d-1)$-faces $x \in M$ with $W(x) = True$ and all $y \subsetneq x$}{$W(y) := True$}

\nl\lFor{all $d$-faces $x \in M$ and all $y \subsetneq x$ with $W(y) = False$}{$B(y) := B(x)$} 
    \caption{MorseWatershed($F$,$M$) -- computes the Morse watershed $W$ of a Morse stack $F$ defined on a normal pseudomanifold $M$.}
    \label{alg:MorseWatershedAlgo}
\end{procedure*}

The soundness of this algorithm is a direct consequence of the above results. 
It may be implemented in linear time, with respect to the number 
of incidence relations of $M$, that is the cardinality of the set 
$\{(x,y) \; \mid \; x, y \in M$ and $x \subsetneq y \}$. 

\section{Morse watersheds and minimum spanning forests}
\label{sec:mst}

In \cite{cousty2009collapses}, an equivalence result which links
the notion of a watershed in an arbitrary stack with the one of a minimum
spanning forest is given. In this section, we refine this result in the case of Morse stacks.

Recall that we have defined a graph as a complex $X \in \bb{S}$ such that the dimension of $X$ is at most $1$. 

Let $X \in \bb{S}$ with $dim(X) = 0$, that is $X$ is a non-empty set of vertices. Let $Y$ be a graph such that $X \preceq Y$.
We say that $Y$ is a \emph{forest rooted by $X$}
if: 
\begin{itemize}
    \item we have $X = Y$, or
    \item there exists a free pair $(x,y)$ of $Y$ such that $Y \setminus \{x,y\}$ is a
forest rooted by~$X$. If $(x,y)$ is a free pair for $Y$, we say that $x$ is a \emph{leaf for $Y$}. 
\end{itemize}

If $X$ is made of a single vertex, then it may be seen that the previous definition is an inductive definition, which is equivalent
to the notion of a rooted tree in the sense of graph theory. If $X$ is made of $k$ vertices, then $Y$ has $k$ connected components.
Each of these connected components is a rooted tree for some vertex of~$X$. 


Let $M \in \bb{M}_d$.
The \emph{facet graph  of $M$} is the graph, denoted by $\Upsilon_M$, such that: 
\begin{itemize}
    \item A vertex $\{ x \}$ is in  $\Upsilon_M$ if and only if  $x$ is a $d$-face of $M$;
    \item An edge $\{x,y\}$ is in $\Upsilon_M$ if and only if $x \cap y$ is a  $(d-1)$-face of $M$.
\end{itemize}

Let $F$ be a Morse stack on $M \in \bb{M}$, and let $X = \{ \{ x \} \; \mid \;
x \in \mathfrak{min}(F) \}$. By Prop. \ref{pro:mw2},
each $\{ x \} \in X$ is a vertex of $\Upsilon_M$. Let $Y \preceq \Upsilon_M$ be a forest rooted by~$X$.
We say that $Y$ is a \emph{spanning forest for $\mathfrak{min}(F)$}
if all vertices of $\Upsilon_M$ are in~$Y$.  We define the \emph{weight of~$Y$} as the sum of all numbers $F(x \cap y)$, where $\{x,y\}$ is an edge of $Y$.
We say that $Y$ is a \emph{minimum spanning forest for $\mathfrak{min}(F)$}, if $Y$ is a spanning forest for $\mathfrak{min}(F)$
whose weight is minimum.

Let $F$ be a Morse stack on $M \in \bb{M}_d$.
We denote by $S$ the set of all couples of $d$-faces $(x,y)$ in $\bb{M}$ such that  $(x, x \cap y)$ is a differential pair of $F$,
 and $(x \cap y,y)$ is a flat pair of $F$.  Thus, $\pi = \langle x,x \cap y,y \rangle$
 is a $\nabla_d$-path  in $F$ from $x$ to $y$. 
\newline
The \emph{watershed forest of $F$} is the graph $G \preceq \Upsilon_M$ such that: 
\begin{itemize}
    \item All vertices of $\Upsilon_M$ are in $G$;
    \item An edge $\{ x,y \}$ is in $G$ if and only if $(x, y)$ or $(y,x)$ is a couple in $S$. 
\end{itemize}

From Proposition \ref{pro:mw3}, we can check that the watershed forest is indeed a forest. More precisely, we
can derive the following result.

\begin{Proposition} \label{pro:sf1}
If $F$ is a Morse stack on $M \in \bb{M}_d$, then
the watershed forest of $F$ is a spanning forest for $\mathfrak{min}(F)$.
\end{Proposition}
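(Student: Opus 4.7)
The plan is to show that the watershed forest $G$ has $|V(G)|-|X|$ edges and exactly $|X|$ connected components, which forces $G$ to be a forest whose trees are each rooted at a single minimum, and then to verify the inductive definition of a forest rooted by $X$.

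First I would establish a parent-uniqueness lemma: for every non-minimum $d$-face $v$, there exists exactly one $d$-face $u$ with $(u,v)\in S$, which I will call $\mathrm{parent}(v)$. Existence follows from Prop.~\ref{pro:mw3}: the unique $\Lambda_d$-path from the minimum $m$ linked to $v$ ends with $\langle \ldots, u, z, v\rangle$, where $z=u\cap v$, $(u,z)\in\dif_d(F)$ and $(z,v)\in\gradient_d(F)$, so $(u,v)\in S$. Uniqueness uses both hypotheses: the Morse condition on $F$ forces $v$ to lie in at most one flat pair, pinning down $z$; the non-branching condition on $M$ then pins down the other $d$-face containing $z$, which is $u$ (it must differ from $v$, since $F(u)<F(z)=F(v)$ would otherwise make $z$ lie in a second flat pair).

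Next, $F(\mathrm{parent}(v))<F(v)$ implies that iterated parents strictly decrease in altitude and so must terminate at a minimum. Therefore every $d$-face is connected in $G$ to its $\Lambda_d$-linked minimum via its parent chain. Each edge $\{x,y\}$ of $G$ has, by construction, $(x,y)\in S$ or $(y,x)\in S$, and is then the parent edge of exactly one endpoint (the one with higher $F$-value); thus $|E(G)|$ equals the number of non-minimum $d$-faces, namely $|V(G)|-|X|$. Moreover, if $\{x,y\}$ is an edge of $G$ with $(x,y)\in S$, concatenating the unique $\Lambda_d$-path from $m_x$ to $x$ with $\langle x,x\cap y,y\rangle$ yields a $\Lambda_d$-path from $m_x$ to $y$, so the uniqueness clause of Prop.~\ref{pro:mw3} gives $m_y=m_x$. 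Hence connected components of $G$ sit inside the fibres of the $\Lambda_d$-link map to $\mathfrak{min}(F)$, and since parent chains already connect each $d$-face to its minimum, these components coincide with the fibres, yielding $|X|$ components.

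Combining the counts, $G$ has $|V(G)|-|X|$ edges and $|X|$ components, so $G$ is acyclic and is a forest in which each tree contains a unique vertex of $X$. To match the inductive definition of a forest rooted by $X$, I would note that any tree on more than one vertex has at least two leaves, hence at least one non-root leaf, producing a free pair $\{x,y\}$ of $G$ with $x\notin X$; removing it and iterating reduces $G$ to $X$. The main obstacle is the parent-uniqueness step, where the Morse condition on $F$ and the non-branching condition on $M$ must be combined carefully; once that is in place, the remainder is a standard edge/component count.
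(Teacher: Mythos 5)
Your proof is correct, and it matches the paper's intent: the paper states only that the result "can be derived from Proposition~\ref{pro:mw3}", and your argument is precisely that derivation, via the parent-uniqueness observation (unique flat pair at $v$ from the Morse condition, plus non-branching to pin down the other $d$-face), the strictly descending parent chains terminating in minima, and the edge/component count $|E(G)|=|V(G)|-|X|$ with $|X|$ components. Nothing is missing; the final reduction to the inductive definition of a forest rooted by $X$ via non-root leaves is also handled correctly.
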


\begin{figure}
\begin{center}
 
    \includegraphics[width=1\columnwidth]{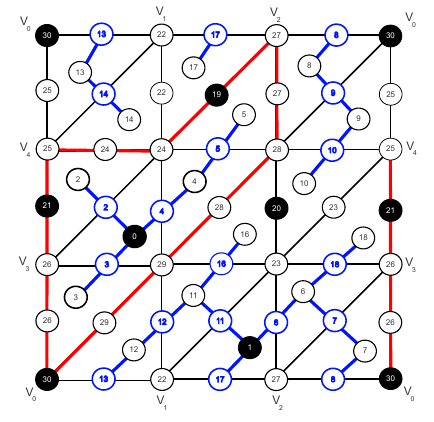} 
  \caption{\label{fig:MorseWatershedMSF}
    The Morse watershed (in red) on a Morse stack $F$ defined on a 2D-torus. In blue, the watershed forest, which is the minimum spanning forest for $\mathfrak{min}(F)$.
  }
\end{center}
\end{figure}

The following  optimality theorem~\ref{th:sf2} is illustrated in Fig.~\ref{fig:MorseWatershedMSF}.
\begin{theorem} \label{th:sf2} Let $F$ be a Morse stack on $M \in \bb{M}$.
The watershed forest of $F$ is a minimum spanning forest for $\mathfrak{min}(F)$.
Furthermore, the watershed forest of $F$ is the unique minimum spanning forest for $\mathfrak{min}(F)$.
\end{theorem}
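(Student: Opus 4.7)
The plan is to reduce both minimality and uniqueness to a per-facet comparison via a parent-edge bijection, exploiting the Morse condition that each non-minimum facet belongs to a unique flat pair. First, Prop.~\ref{pro:sf1} gives that the watershed forest $G$ is a spanning forest for $\mathfrak{min}(F)$, and Prop.~\ref{pro:mw1} tells us that each minimum is a single facet. Hence for any spanning forest $G' \preceq \Upsilon_M$ rooted at $\mathfrak{min}(F)$, each non-minimum facet $y$ has a unique \emph{parent edge} $\{w_y, y\}$ in $G'$ (the edge from $y$ on the unique path from $y$ to its root in the tree containing $y$), and the assignment $y \mapsto \{w_y, y\}$ is a bijection between non-minimum facets and edges of $G'$. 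Therefore $\mathrm{weight}(G') = \sum_y F(w_y \cap y)$.

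Next, I identify the edge of $G$ at each non-minimum facet. Since $F$ is a Morse stack and $y$ is not a minimum, $y$ lies in a unique flat pair $(z_y, y)$, so $F(z_y) = F(y)$. The other $d$-face $x_y$ sharing $z_y$ must satisfy $F(x_y) < F(y)$; otherwise $z_y$ would also be flat with $x_y$, violating the Morse condition that every face lies in at most one flat pair. The triple $\langle x_y, z_y, y \rangle$ is a $\Lambda_d$-path, so by the very definition of the watershed forest, $\{x_y, y\}$ is the edge of $G$ at $y$, of weight $F(z_y) = F(y)$. This edge is precisely the parent edge of $y$ in $G$.

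The heart of the argument is then a purely local comparison. For any competing neighbour $w \neq x_y$ of $y$ in $\Upsilon_M$, the $(d-1)$-face $w \cap y$ is distinct from $z_y$, so by the Morse condition the covering pair $(w \cap y, y)$ cannot be flat, which forces $F(w \cap y) > F(y)$ by the stack inequality $F(w \cap y) \geq F(y)$. Summing the per-facet lower bound yields $\mathrm{weight}(G') \geq \sum_y F(y) = \mathrm{weight}(G)$, with equality iff $w_y = x_y$ for every non-minimum facet $y$, iff $G' = G$. This simultaneously establishes that $G$ is a minimum spanning forest for $\mathfrak{min}(F)$ and that it is the unique such forest.

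The only delicate point is the parent-edge bijection. This is the classical fact that in a forest with $n$ vertices rooted at a set of size $|X|$, the $n - |X|$ non-root vertices are in bijection with the $n - |X|$ edges via the parent map; I would justify it briefly by induction on the inductive definition of a rooted forest via leaf-removal given at the start of Section~\ref{sec:mst}. Everything else collapses to the single structural constraint that each face participates in at most one flat pair, so the proof is essentially an edge-by-edge exchange argument dressed up in the language of $\Lambda_d$-paths.
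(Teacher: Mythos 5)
Your proof is correct, but it takes a different route from the paper's. Both arguments hinge on the same local observation: for a non-minimum facet $y$ with unique flat pair $(z_y,y)$, the Morse condition and the non-branching property force $F(w\cap y)>F(y)$ for every neighbour $w\neq x_y$, so $\{x_y,y\}$ is the \emph{strict} minimum-weight edge of $\Upsilon_M$ incident to $y$. The paper then invokes the classical minimum-spanning-tree cut lemma (cited from \cite{cormen1999introduction,motwani1995randomized}): each edge of the watershed forest, being the unique minimum-weight edge at some vertex, must belong to every minimum spanning forest, whence $W\subseteq G$ and equality follows since both are spanning forests. You instead replace that black-box lemma with an explicit weight computation: the parent-edge bijection gives $\mathrm{weight}(G')=\sum_y F(w_y\cap y)\geq\sum_y F(y)=\mathrm{weight}(G)$, with equality forcing $w_y=x_y$ for all $y$ and hence $G'=G$. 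Your version is self-contained (modulo the parent bijection, which your leaf-removal induction handles), and it has the bonus of exhibiting the optimal weight as $\sum_y F(y)$ summed over non-minimum facets. Two minor remarks. First, your assertion that $\{x_y,y\}$ is the parent edge of $y$ \emph{in $G$ itself} is not immediate (a priori the tree path from $y$ to its root could leave through an ``up'' edge); but you do not actually need it, since $\mathrm{weight}(G)=\sum_y F(y)$ already follows from the bijection $y\mapsto\{x_y,y\}$ between non-minimum facets and edges of $G$, which is forced by the definition of the watershed forest and the uniqueness of flat pairs. Second, the fact that every non-minimum facet lies in a flat pair $(z_y,y)$ should be credited to Proposition~\ref{pro:mw2}(ii). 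Neither point is a gap.
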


\begin{proof}
Let $G$ be a minimum spanning forest for $\mathfrak{min}(F)$.
By a minimum spanning tree lemma \cite{cormen1999introduction}, \cite{motwani1995randomized},
if $\{ x \}$ is a vertex of $G$, then $G$ must contain
an edge $\{x,y\}$ that is a minimum weighted edge containing $\{x\}$. Now let $W$ be the watershed forest of $F$
and let $\{x,y\}$ be an edge of $W$. By the very definition of $W$, either $(x \cap y, x)$ or $(x \cap y, y)$ is a flat pair.
By the definition of a Morse stack, if $(x \cap y, x)$ is a flat pair, then $\{x,y\}$ is the only minimum weighted edge containing $\{x\}$.
Similarly, if  $(x \cap y, y)$ is a flat pair, then $\{x,y\}$ is the only minimum weighted edge containing $\{y\}$.
It follows that, if $e$ is an edge of $W$, then $e$ is the only  minimum weighted edge containing some vertex $v$ of $W$.
Since $v$ is necessarily in $G$, we deduce by the above lemma that $e \in G$. Therefore, we have $W \subseteq G$.
Since both $G$ and $W$ are spanning forests for $\mathfrak{min}(F)$, $G$ and $W$ have the same cardinality. Thus, we must have
$G = W$.
This shows that $W$ is the unique minimum spanning forest for $\mathfrak{min}(F)$. 
\end{proof}

\begin{figure*}[tb]
\begin{center}
    \includegraphics[width=.8\textwidth]{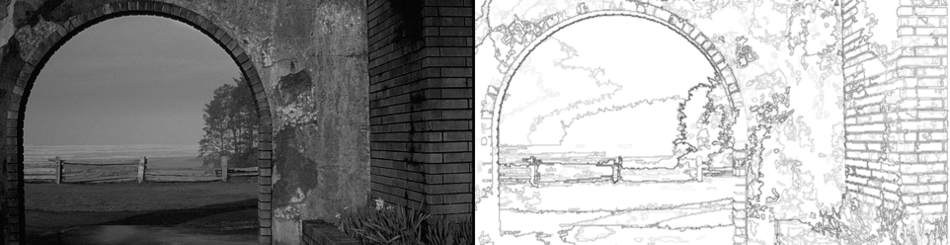} 
    
    \smallskip
    \includegraphics[width=.8\textwidth]{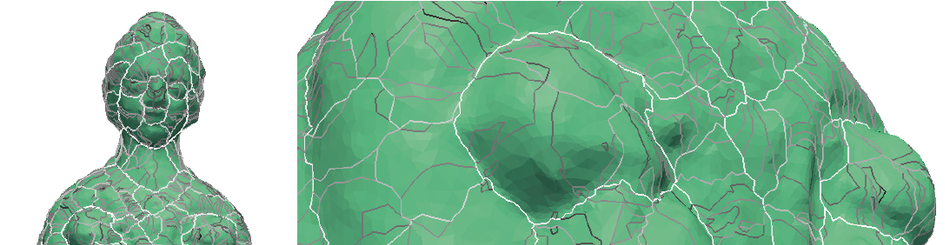} 
  \caption{\label{fig:saliency}
    Top: an image (left), together with a geodesic saliency map (right) where the darker a contour is, the more persistent it is. Bottom: two different views of a triangular mesh, superimposed with a geodesic saliency map where the whiter a contour is, the more persistent it is. 
  }
\end{center}
\end{figure*}
\section{Discussion, future work and conclusion}
\label{sec:discussion}

In this paper, we introduce Morse watersheds which satisfy a fundamental drop-of-water principle. As far as we know, this is the first definition of a watershed in the context of discrete Morse theory. These watersheds are based on Morse stacks, a class of functions that are equivalent to discrete Morse functions.
We show that the watershed of a Morse stack on a normal pseudomanifold is uniquely defined, 
and can be obtained with a linear-time algorithm relying on a sequence of collapses. 
Last, we prove that such a watershed is the cut of a unique minimum spanning forest, rooted in the minima of the Morse stack. 


While a watershed definition have been proposed in the continuous Morse setting \cite{najman1994watershed}, the watershed notion was mainly a source of inspiration in discrete Morse theory. We mention in particular the following.
\begin{itemize}
\item A watershed algorithm was used as a preprocessing in \cite{vcomic2016computing} for computing a gradient vector field.
Our approach directly provides watershed basins that are defined with gradient paths.
\item In \cite{delgado2014skeletonization}, watershed ideas were used as a motivation for obtaining Morse cells similar to catchment basins, with application to image segmentation. Our framework allows clarifying the difference between Morse cells and watershed basins. For example, in Fig.~\ref{fig:MorseWatershed}, the critical 1-simplex with altitude 20 is not part of the watershed cut, while it is part of the boundary of the Morse cell: it is a separating face, but it is not a biconnected one. We intend to explore in more details those differences in future work. We also envision studying the Morse-Smale decomposition.
\end{itemize}

One significant difference between Morse stacks and discrete Morse functions is that minima are $d$-dimensional simplices in our framework, while they are 0-dimensional ones in Morse theory. Although this might appear minor, such a difference has important consequences. In particular, the watershed is a pure $(d-1)$-subcomplex, while a similar property is not directly possible with the boundary of Morse cells, as classically  defined (see~\cite{delgado2014skeletonization} for example). Our approach allows for easily extracting topological features linking two regions, following the seminal paper \cite{najman1996geodesic}: indeed, we can for instance weight any simplex of the watershed cut with the persistence/dynamics \cite{boutry:hal-03676854} at which it disappears in a filtering. Such a representation, illustrated in Fig.~\ref{fig:saliency}, is called a {\em geodesic saliency map} in mathematical morphology, and is widely used (under the name {\em ultrametric contour map}~\cite{arbelaez2010contour}) as a post-processing step behind deep-learning approaches. See \cite{najman2011equivalence,cousty2018hierarchical} for theoretical studies of this notion, and \cite{perret2019higra} for a toolbox implementing many variations around it.

Data analysis heavily relies on data simplification and data visualization. We advocate that the watershed, together with filtering operators such as morphological dynamics, is a cornerstone for data analysis \cite{challa2019watersheds}. We aim at controlling the topological simplification, and understanding what is discarded in the simplification. The results of this paper are a first step in this direction. We envision using skeleton algorithms such as \cite{bertrand2014powerful}, and tools from cross-section topology \cite{bertrand1997image}, that, up to now, have been used mainly for image analysis. Indeed, these tools can be applied to general data. 
In this regard, an important perspective of the current paper is to bring together the topological data analysis framework with 
the mathematical morphology one.




\section*{Acknowledgements}
The authors would like to thank both Julien Tierny and Thierry Géraud, for many insightful discussions.

\begin{appendices}

\section{Normal pseudomanifolds} 
\label{app:pseudo}

A normal pseudomanifold is usually defined as a pseudomanifold that satisfies a certain link condition, which corresponds to a local property \cite{BAGCHI2008327}, \cite{BASAK2020107035}, \cite{DBNN2008}.
In this section, we show that this definition is equivalent to the one given in Definition~\ref{def:normal}.

Let $S$ be a finite set of simplexes.
If $x$ and $y$ are facets of $S$, a \emph{$p$-chain (in $S$) from $x$ to $y$} is a sequence
$\langle x = x_0,...,x_k = y \rangle$ of facets of $S$ such that, for each $i \in [0,k-1]$,
$x_i \cap x_{i+1}$ is a $q$-face of $S$, with $q \geq p$.
The set $S$ is \emph{$p$-connected} if, for any two facets $x,y$ in $S$, there is a $p$-chain in $S$ from
$x$ to $y$. 

We observe that: 
\begin{itemize}
    \item A complex is connected if and only if it is $0$-connected.
    \item A $d$-pure complex is strongly connected if and only if it is $(d-1)$-connected.
\end{itemize}


Let $X$ be a complex. Two faces $x,y \in X$ are \emph{adjacent} if $x \cup y \in X$.
The \emph{link of $x \in X$ in $X$} is the complex
$lk(x,X) = \{y \in X \; \mid \;  x \cap y  = \emptyset$ and $x \cup y \in X \}$. 
\newline
The \emph{star of $x \in X$ in $X$} is the set
$st(x,X) = \{y \in X \; \mid \;  x \subseteq y \}$.

Let $X$ be a $d$-pseudomanifold. We say that $X$ 
\emph{satisfies the link condition}  if $lk(x,X)$ is connected whenever
$x$ is a $p$-face of $X$ and $p \leq d-2$.

Let $X$ be a complex and $x$ be a $p$-face of $X$.
Let $st^*(x,X) = st(x,X) \setminus \{ x \}$.
We have
 $lk(x,X) = \{y \setminus x \; \mid \; y \in st^*(x,X) \}$
and $st^*(x,X) = \{z \cup x \; \mid \; z \in lk(x,X) \}$. 
\newline
We note that there is a set isomorphism between $lk(x,X)$ and $st^*(x,X)$, which preserves
set inclusion.
If $y \in st^*(x,X)$, the corresponding face $y \setminus x$ of $lk(x,X)$ is such that $dim(y \setminus x) = dim(y) - (p + 1)$.
Thus,  $dim(y \setminus x) = dim(y) - p^+$, where $p^+ = p +1$ is the number of elements in $x$.

Let $X$ be a $d$-pseudomanifold and $x$ be a $p$-face of $X$.
Let $p^+ = p+1$ and $d' = d - p^+$.
The following facts are a direct consequence of the above isomorphism:
\begin{itemize}
    \item The complex $lk(x,X)$ is $d'$-pure.
    \item The complex $lk(x,X)$ is non-branching.
    \item The set $st^*(x,X)$ is $q$-connected if and only if $lk(x,X)$ is $q'$-connected, with $q' = q - p^+$. 
\end{itemize}


\begin{Proposition} \label{pro:pseudonormal2}
A pseudomanifold is normal if and only if it satisfies the link condition.
\end{Proposition}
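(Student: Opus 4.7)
My plan is to translate between the two conditions via the set isomorphism between $lk(x,X)$ and $st^*(x,X)$ recorded just before the statement, which converts $p^+$-connectedness of the star into $0$-connectedness (i.e., plain connectedness) of the link. For the forward direction, suppose $X$ is a normal $d$-pseudomanifold and let $x$ be a $p$-face with $p \le d-2$. The star $st(x,X)$ is open in $X$ and is connected, since for any $y, y' \in st(x,X)$ the trivial path $\langle y, x, y' \rangle$ lies in $st(x,X)$. Strict connectedness of $X$ therefore yields that $st(x,X)$ is strongly connected, i.e.\ $(d-1)$-connected. Because any strong $d$-chain consists only of $d$-faces and $(d-1)$-faces---none of which can equal $x$, since $p \le d-2$---the chain actually lives in $st^*(x,X)$. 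Hence $st^*(x,X)$ is $(d-1)$-connected and in particular $p^+$-connected, so the isomorphism yields that $lk(x,X)$ is connected.

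For the converse, I plan an induction on the dimension $d$, proving the slightly stronger statement that any connected, $d$-pure, non-branching complex satisfying the link condition is strictly connected; the proposition is then the specialisation to pseudomanifolds. The base $d=1$ is immediate: for a $d$-pure open set, strong connectedness is $(d-1)$-connectedness, which for $d=1$ coincides with plain connectedness. For the inductive step, let $S \sqsubseteq X$ be connected with facets $y_0, y_m$. By Remark~\ref{rem:path}, there is a path of facets $\langle y_0 = u_0, \ldots, u_m = y_m \rangle$ with each intersection $z_i := u_i \cap u_{i+1}$ in $S$. A step where $z_i$ is a $(d-1)$-face is already strong; otherwise $z_i$ is a $p_i$-face with $p_i \le d-2$, and openness of $S$ gives $st(z_i,X) \subseteq S$, so it suffices to bridge $u_i$ and $u_{i+1}$ by a strong $d$-chain inside $st(z_i,X)$. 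Via the isomorphism, this reduces to strong connectedness of $lk(z_i,X)$.

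The heart of the argument, and the main obstacle I anticipate, is thus obtaining strong connectedness of $lk(z_i,X)$ from the link condition on $X$. The link is $d''$-pure and non-branching by the preliminary facts, and connected by the link condition; what remains is to verify that it itself satisfies the link condition in its own dimension $d'' = d - p_i - 1 < d$. Using the standard identity $lk(w, lk(z_i,X)) = lk(w \cup z_i, X)$, together with the easy bookkeeping that the codimension of $w$ in $lk(z_i,X)$ equals the codimension of $w \cup z_i$ in $X$, a codimension-$\ge 2$ face $w$ of $lk(z_i,X)$ produces a codimension-$\ge 2$ face $w \cup z_i$ of $X$, whose link is connected by hypothesis. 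So $lk(z_i,X)$ is a $d''$-dimensional, connected, pure, non-branching complex satisfying the link condition, and the inductive hypothesis---applied to the open subset $lk(z_i,X)$ of itself---delivers the desired strong connectedness. The isomorphism transports this into a strong $d$-chain in $st^*(z_i,X) \subseteq S$; concatenating over $i$ yields a strong $d$-chain from $y_0$ to $y_m$ in $S$, closing the induction. The delicate point is precisely this iterated-link bookkeeping, which is where I expect the proof to be heaviest.
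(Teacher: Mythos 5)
Your proof is correct, and the direction \enquote{normal $\Rightarrow$ link condition} is essentially the paper's argument (star is open and connected, hence strongly connected, transported through the star--link isomorphism). The converse is where you genuinely diverge. The paper avoids iterated links entirely: it takes a $p$-chain between two facets of a connected open set $S$ with $p$ maximal and the number of $p$-dimensional intersections minimal, and uses only \emph{plain} connectedness of $lk(z,X)$ (i.e.\ $0$-connectedness) to replace one offending step by a chain whose intersections have dimension $>p$, contradicting extremality; no induction on $d$ and no verification that links again satisfy the link condition is needed. You instead prove the stronger fact that $lk(z,X)$ is \emph{strongly} connected, which forces the induction on dimension, the identity $lk(w,lk(z,X))=lk(w\cup z,X)$, and the codimension bookkeeping --- all of which you carry out correctly (the inductive class of connected, pure, non-branching complexes satisfying the link condition is closed under taking links, and the base case $d=1$ is as you say). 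The trade-off: the paper's extremal argument is shorter and self-contained, while your route establishes en passant that links of normal pseudomanifolds are themselves (normal) pseudomanifolds, which is precisely the content of Proposition~\ref{pro:pseudonormal3}; the paper instead extracts that statement afterwards from the \emph{other} direction of its proof. One small presentational point: since the links $lk(z_i,X)$ occurring in your inductive step can have any dimension between $1$ and $d-1$, you should phrase the induction as strong induction on $d$.
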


\begin{proof}
Let $X$ be a $d$-pseudomanifold. 
\begin{enumerate}
    \item Suppose $X$ satisfies the link condition and let $S$ be a connected open subset of $X$. 
\newline
Let $x$ and $y$ be two $d$-faces of $S$. By Remark \ref{rem:path}, there exists a
$p$-chain $\pi$ in $S$ from $x$ to $y$. Thus, $\pi = \langle x = x_0,...,x_k = y \rangle$ is a sequence of facets of $S$ such that, for each $i \in [0,k-1]$,
$x_i \cap x_{i+1}$ is a $q$-face of $S$, with $q \geq p$.
We choose $\pi$ such that $p$ is maximal and, if $p$ is maximal, such that the number $K(\pi)$ of $p$-faces $x_i \cap x_{i+1}$, with  $i \in [0,k-1]$,
is minimal.
If $p = d-1$, it means that $S$ is strongly connected; then we are done. 
\newline
Suppose $p < d-1$ and let $x_i, x_{i+1}$ such that $z = x_i \cap x_{i+1}$ is a $p$-face. 
\newline
Since $X$ satisfies the link condition, $lk(z,X)$ is connected.
By the isomorphism between $lk(z,X)$ and $st^*(z,X)$, it follows there is a $q$-chain
$\langle x_i = w_0,...,w_l = x_{i+1} \rangle$ in $st^*(z,X)$ with $q >p$.
Therefore,
$\pi' = \langle x = x_0,...,x_i = w_0,...,w_l = x_{i+1},...,x_k = y \rangle$ is a $p$-chain in $S$ from $x$ to $y$.
But we have $K(\pi') < K(\pi)$, a contradiction. Thus, each connected open subset of $X$
is strongly connected.

    \item Suppose $X$ is strictly connected. That is, any connected open subset of $X$ is $(d-1)$-connected.
Let $x$ be a $p$-face of $X$ with $p \leq d-2$.
The set $st(x,X)$ is a connected open subset of $X$, thus it is $(d-1)$-connected.
Since $p < d-1$, it means that $st^*(x,X)$ is $(d-1)$-connected. Therefore, $st^*(x,X)$ is strongly connected.
By the isomorphism between $lk(x,X)$ and $st^*(x,X)$, it follows that $lk(x,X)$ is strongly connected.
Thus $lk(x,X)$ is connected.
\end{enumerate}
\end{proof}

In the second part of the proof of Prop. \ref{pro:pseudonormal2}, we showed that $lk(x,X)$ is strongly connected.
Consequently, we have the following characterization  of a normal pseudomanifold.



\begin{Proposition} \label{pro:pseudonormal3}
 A pseudomanifold $X$ is normal if and only if, for each $p$-face $x$ of $X$, with $p \leq d-2$,
the complex $lk(x,X)$ is a pseudomanifold.
\end{Proposition}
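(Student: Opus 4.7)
The plan is to derive Proposition \ref{pro:pseudonormal3} almost directly from Proposition \ref{pro:pseudonormal2} together with the structural facts about links already collected before its proof. The key observation is that Proposition \ref{pro:pseudonormal2} reduces normality to the link condition, i.e.\ to connectedness of the links, and what remains to check is that (under the pseudomanifold hypothesis) strong connectedness of each $lk(x,X)$ is equivalent to connectedness of each $lk(x,X)$ in the presence of the purity and non-branching properties that the link always enjoys.

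For the forward direction, I would fix a $p$-face $x$ of $X$ with $p \leq d-2$ and set $p^+ = p+1$, $d' = d - p^+$. The bullet-point facts preceding Proposition \ref{pro:pseudonormal2} give immediately that $lk(x,X)$ is $d'$-pure and non-branching, so only strong connectedness needs to be verified. Here I would simply re-use the argument from part~2 of the proof of Proposition \ref{pro:pseudonormal2}: the star $st(x,X)$ is a connected open subset of $X$, so by strict connectedness of $X$ it is $(d-1)$-connected; since $p < d-1$, the face $x$ itself is not a facet of $st(x,X)$, so $st^*(x,X)$ is also $(d-1)$-connected; and finally the set isomorphism with $lk(x,X)$ transports this into $(d'-1)$-connectedness of $lk(x,X)$, which is exactly strong connectedness.

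For the converse, I would assume that every such $lk(x,X)$ is a pseudomanifold. Then each link is in particular strongly connected, hence connected, which is precisely the link condition. Proposition \ref{pro:pseudonormal2} then yields that $X$ is normal.

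I do not anticipate any real obstacle: the only point requiring care is the bookkeeping of dimensions through the $lk/st^*$ isomorphism, and the fact that the hypothesis $p \leq d-2$ forces $p < d-1$, which is what allows the jump from $(d-1)$-connectedness of $st(x,X)$ to $(d-1)$-connectedness of $st^*(x,X)$ (i.e.\ removing the single face $x$ does not affect connectivity at the facet level since $x$ itself is not a facet there).
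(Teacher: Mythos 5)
Your proposal is correct and follows essentially the same route as the paper, which derives Proposition~\ref{pro:pseudonormal3} directly from Proposition~\ref{pro:pseudonormal2} by noting that part~2 of its proof actually establishes \emph{strong} connectedness of $lk(x,X)$, while purity and the non-branching property of the link come from the facts listed before that proposition. Your extra remark on why $st^*(x,X)$ inherits $(d-1)$-connectedness from $st(x,X)$ (namely that $x$ is not a facet there and cannot occur as an intersection of two facets in a $(d-1)$-chain) is a correct elaboration of a step the paper leaves implicit.
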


\section{discrete Morse functions} 
\label{app:DMF}

Let us consider the following definition of a discrete Morse function:

\begin{definition}[Morse function]
Let $X$ be a complex and let $F$ be a map from $X$ to $\bb{Z}$.
We say that $F$ is a \emph{discrete Morse function on $X$} if any face of $X$ is in at most one covering pair $(x,y)$ in~$X$
such that $F(x) \geq F(y)$. If $F$ is a discrete Morse function, we say that such a pair is a \emph{regular pair of $F$}.
\end{definition}

It may be checked that this definition is equivalent to the classical one given by Forman (See Def. 2.1 and Lemma 2.5 of \cite{forman1998morse}).
\newline
In this way, the \emph{gradient vector field of a discrete Morse function $F$}, written
$\gradient(F)$, is the set composed of all regular pairs of $F$. 

The following restriction of a discrete Morse function will lead us to Morse stacks. 

We say that a discrete Morse function $F$ on $X$ is \emph{flat} if we have $F(x) = F(y)$ whenever $(x,y)$ is a regular pair of $F$,
that is, if each regular pair of $F$ is a flat pair of $F$. 

We can check that a map $F$ from $X$ to $\bb{Z}$ is a flat discrete Morse function if and only if: 
\begin{enumerate}
    \item Each  covering pair $(x, y)$ in~$X$ is such that $F(x) \leq F(y)$; 
    \item Each face of $X$ is in at most one flat pair of $F$. 
\end{enumerate}
Therefore, if we consider the function $-F$, we obtain the following:

\begin{Proposition}
Let $X$ be a complex and let $F$ be a map from $X$ to $\bb{Z}$.
The map $F$ is a Morse stack on $X$ if and only if the map $-F$ is a flat discrete Morse function on $X$.
\end{Proposition}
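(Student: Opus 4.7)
The plan is to verify that the two clauses defining ``Morse stack'' for $F$ correspond, under negation, to the two clauses characterizing ``flat discrete Morse function'' for $-F$. Fortunately, the characterization of flat discrete Morse functions immediately preceding the statement already isolates exactly the two conditions that need to be matched, so the argument reduces to a translation between the corresponding conditions for $F$ and for $-F$.

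First I would recall from Section~\ref{sec:stacks} that a map $F : X \to \bb{Z}$ is a simplicial stack on $X$ if and only if $F(x) \geq F(y)$ for every covering pair $(x,y)$ of $X$. Since $(-F)(x) \leq (-F)(y)$ is plainly equivalent to $F(x) \geq F(y)$, condition (1) in the characterization of a flat discrete Morse function applied to $-F$ is precisely the statement that $F$ is a simplicial stack on $X$.

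Next I would observe that for any covering pair $(x,y)$, one has $(-F)(x) = (-F)(y)$ if and only if $F(x) = F(y)$. Hence the flat pairs of $-F$ and the flat pairs of $F$ are exactly the same set of covering pairs. Consequently, condition~(2) in the characterization of a flat discrete Morse function applied to $-F$ (``each face of $X$ is in at most one flat pair of $-F$'') is literally the extra clause in Definition~\ref{def:morse1} of a Morse stack for $F$ (``any face of $X$ is in at most one flat pair of $F$'').

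Combining the two equivalences, $F$ satisfies the stack condition and the at-most-one-flat-pair condition if and only if $-F$ satisfies the two conditions characterizing a flat discrete Morse function. There is no real obstacle here: the proof is a matter of recognizing that the inequality defining a stack flips correctly under $F \mapsto -F$ and that the equality defining a flat pair is insensitive to sign, so the equivalence is obtained by direct unfolding of the definitions together with the bullet-point characterization of flat discrete Morse functions stated just above the proposition.
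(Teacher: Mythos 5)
Your proof is correct and follows exactly the route the paper intends: it matches the two clauses of the flat-discrete-Morse-function characterization (stated just before the proposition) against the stack condition and the at-most-one-flat-pair condition for $F$, using that the covering-pair inequality reverses under $F \mapsto -F$ while flat pairs are unchanged. The paper leaves this unfolding implicit (``Therefore, if we consider the function $-F$\dots''), so your write-up is simply a more explicit version of the same argument.
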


The following proposition claims that, up to an equivalence, we may
assume that any discrete Morse function is flat (see Def. 2.27 and Prop. 4.16 of \cite{scoville2019discrete}).

\begin{Proposition} [from {\cite{scoville2019discrete}}]
If $F$ is a discrete Morse function on $X$, then there exists a flat discrete Morse function  $G$ on $X$ such that,
for every covering pair $(x,y)$ in $X$,
we have $F(x) \geq F(y)$ if and only if $G(x) \geq G(y)$.
In other words, the function~$G$ is such that $\gradient(G) = \gradient(F)$.
\end{Proposition}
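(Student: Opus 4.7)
The plan is to construct $G$ by redistributing $F$-values within each regular pair, pushing the higher value down to the lower-dimensional partner. Concretely, for every $\sigma \in X$ I set $G(\sigma) = F(\rho)$ if $\sigma$ lies in a regular pair $(\rho,\sigma) \in \gradient(F)$ with $\rho \subset \sigma$, and $G(\sigma) = F(\sigma)$ otherwise (i.e., when $\sigma$ is critical or lies in a regular pair of the form $(\sigma,\tau)$ with $\sigma \subset \tau$). Equivalently, $G(\sigma) = F(m(\sigma))$ where $m(\sigma)$ denotes the lower-dimensional member of the (unique) regular pair containing $\sigma$, with the convention $m(\sigma) = \sigma$ when $\sigma$ is critical. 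This map is integer-valued, and trivially satisfies $G(x) = G(y) = F(x)$ on every regular pair $(x,y) \in \gradient(F)$, so all such pairs become flat.

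The remaining task is to check that for every covering pair $(a,b)$ with $(a,b) \notin \gradient(F)$ one has $G(a) < G(b)$; combined with flatness on $\gradient(F)$, this yields that $G$ is a flat discrete Morse function with $\gradient(G) = \gradient(F)$. Since $F$ is Morse, $(a,b) \notin \gradient(F)$ forces $F(a) < F(b)$. Whenever $m(a) = a$ (i.e., $a$ is critical or upward-matched to a partner different from $b$), the chain $G(a) = F(a) < F(b) \leq F(m(b)) = G(b)$ closes immediately, using that $F(m(b)) \geq F(b)$ holds whether $b$ is critical, upward-matched, or downward-matched via some regular pair $(b'',b)$ satisfying $F(b'') \geq F(b)$ by definition.

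The hard case is when $a$ is downward-matched, i.e., $m(a) = a''$ with $(a'',a) \in \gradient(F)$; then $F(a'') \geq F(a)$ and a priori $F(a'')$ could exceed $F(b)$. The key observation is that $\dim a'' = \dim b - 2$, so the simplex $b$ contains exactly two $(\dim b - 1)$-faces that contain $a''$: one is $a$, call the other $a^*$. The Morse property applied at $a''$---whose unique regular pair is $(a'',a)$---forces $(a'',a^*)$ to be non-regular, giving $F(a'') < F(a^*)$. The Morse property applied at $b$ then forces $(a^*,b)$ to be non-regular whenever $a^* \neq m(b)$, giving $F(a^*) < F(b) \leq F(m(b))$; while in the remaining sub-case $a^* = m(b) = b''$, one reads off $F(a'') < F(a^*) = F(m(b))$ directly. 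In every sub-case the chain closes to $F(m(a)) < F(m(b))$.

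The main obstacle is precisely this downward-matched case: $F(a'')$ is only constrained to be at least $F(a)$ and could in principle be much larger than $F(b)$, so the naive chain $F(m(a)) \leq F(a) < F(b)$ fails. Taming it requires invoking the Morse constraint at \emph{two} distinct faces, $a''$ and $b$, through the auxiliary face $a^*$ of codimension one in $b$. Once this strict inequality is in hand, flatness, the Morse property of $G$, and the identity $\gradient(G) = \gradient(F)$ all follow from the characterization of flat discrete Morse functions recalled immediately above the proposition.
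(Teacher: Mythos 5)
Your proof is correct. Note that the paper itself offers no proof of this proposition: it is imported from Scoville's book (Def.~2.27 and Prop.~4.16), where flatness is usually obtained through the gradient-vector-field machinery---one extracts the induced acyclic matching and then rebuilds a function realizing it, using a global ordering of the simplices compatible with the modified Hasse diagram. Your construction $G=F\circ m$, which pushes the value of the lower member of each regular pair onto its upper member and leaves all other faces untouched, is purely local and more elementary: it never invokes acyclicity of gradient paths, keeps the values of $G$ among those of $F$, and reduces every verification to the single defining property that each face lies in at most one regular pair. The genuinely delicate point---a non-regular covering pair $(a,b)$ whose bottom face $a$ is itself the top of a regular pair $(a'',a)$, so that $G(a)=F(a'')$ could a priori exceed $F(b)$---is handled correctly: since $X$ is a simplicial complex, $b\setminus a''$ has exactly two elements, hence $a''$ lies in exactly two codimension-one faces of $b$, namely $a$ and some $a^*\in X$; the Morse condition at $a''$ forces $F(a'')<F(a^*)$, and either $(a^*,b)$ is non-regular, giving $F(a^*)<F(b)\leq F(m(b))$, or $a^*=m(b)$, giving $F(a^*)=F(m(b))$; in both cases $G(a)<G(b)$. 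Together with $G(x)=G(y)$ on every regular pair, this verifies both conditions in the paper's characterization of flat discrete Morse functions and yields $\gradient(G)=\gradient(F)$, so the argument is complete as written.
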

\end{appendices}


\bibliography{MorseWatersheds}

\ELIMINE{
\newpage 

\section{Discussion and perspectives}

Dans le d\'{e}sordre : \\
DONE - Positionnement : dire faire lien entre morpho et morse. Les Watersheds sont un outil fondamental (post-processing de deep learning).\\
- Positonnement : Minima vs Maxima - Complexe simplicial == watershed -> extraire des caractéristiques qui lit deux régions / saillance\\
- Illustrer branche lunettes ?\\
- Positionnement : Manifold -> On g\`{e}re la dualit\'{e} entre minima et maxima et on r\'{e}cup\`{e}re l'arbre de poids min \\
- Doit-on dire que ce qu'on propose, c'est plus simple que Morse ? \\
- D\'{e}finition de watershed chez Morse. Inspiration chez Robins. Preprocesssing chez Fioriani. \\
- Lien possible entre analyse d'image et visualisation: simplifier l'image. A d\'{e}velopper \\
- Liens skeletons et morse theory \\
- Fonctions de morse peuvent \^{e}tre plong\'{e}es dans le cadre g\'{e}n\'{e}rale des transformations homotopiques qui permettent de g\'{e}n\'{e}rer des skelettes, etc. \\
- Futur: Gr\^{a}ce aux algos de squelettes, calculer une fonction de Morse. \\
- Liens avec la simplification topologique (voir papiers Nicolas sur la dynamique).
Analyser une image, c'est la simplifier. Branches de lunettes. Contr\^{o}le de la modification topologique. \\
- Illuster avec watershed et Saillance. \\
- Illustrer avec squellete en niveau de gris et bouchage de contour ? Au moins en parler bri\`{e}vement.

\subsection{Nicolas}
Quelques idées en vrac:

• extension de ces travaux aux complex polyédriques ? (*)

• extension aux posets, ou aux CW complexes ? (*)

* : rappel : de tout poset on peut calculer le complexe des chaines ou la subdivision barycentrique pour avoir un complexe simplicial et appliquer nos algos. Le problème peut être la valuation (comment la propager lors de la subdivision).

• extension du champs de gradient sur l’ensemble des dimensions => peut nous apporter quoi ? (Une sorte d’algo récursif sur n puis n-1 etc)

• le problème de trouver une bonne revaluation n’est toujours pas résolu (que veut-on préserver comme propriété ?).

• Le WS est une sorte de complexe de Morse sur stack => travailler dans complexe dual en supposant que le domaine soit une surface discrète pour assurer que le complexe dual le soit aussi, tout cela afin d’arriver au complexe de Morse-Smale

• comparaison de toutes les DMF et totes les stacks (classiques vs morse stacks etc) (avantages/inconvénients, algorithmes applicables et leur complexités, les propriétés de chacune, les implication entre elles à une transformation près, etc)

• extension aux set-valued maps ? (Pas sur *du tout* que ca soit pertinent là).
}
\end{document}